\pdfoutput=1
\documentclass[10pt,pra,aps,superscriptaddress,nofootinbib,twocolumn]{revtex4-1}
\usepackage{subcaption}
\usepackage{amsmath,bbm}
\usepackage{amsthm}
\usepackage{amsmath}
\usepackage{latexsym}
\usepackage{amssymb}
\usepackage{float}
\usepackage{graphicx}           
\usepackage{color}
\usepackage{xcolor}
\usepackage{mathpazo}
\usepackage{comment}
\usepackage{enumitem}
\usepackage{multirow}
\usepackage{setspace}
\usepackage[colorlinks=true,linkcolor=blue,citecolor=magenta,urlcolor=blue]{hyperref}
\usepackage{hyperref}
\usepackage{svg}
\usepackage{adjustbox}
\usepackage{rotating}
\usepackage{caption}
\usepackage{subcaption}
\usepackage{soul}

\newcommand{\be}{\begin{equation}}
\newcommand{\ee}{\end{equation}}
\newcommand{\bea}{\begin{eqnarray}}
\newcommand{\eea}{\end{eqnarray}}

\def\squareforqed{\hbox{\rlap{$\sqcap$}$\sqcup$}}
\def\qed{\ifmmode\squareforqed\else{\unskip\nobreak\hfil
\penalty50\hskip1em\null\nobreak\hfil\squareforqed
\parfillskip=0pt\finalhyphendemerits=0\endgraf}\fi}
\def\endenv{\ifmmode\;\else{\unskip\nobreak\hfil
\penalty50\hskip1em\null\nobreak\hfil\;
\parfillskip=0pt\finalhyphendemerits=0\endgraf}\fi}
\DeclareMathOperator{\Tr}{Tr}

\newcommand{\rank}{\text{Rank}}

\newcommand{\ket}[1]{|#1\rangle}
\newcommand{\bra}[1]{\langle#1|}

\makeatletter
\newtheorem*{rep@theorem}{\rep@title}
\newcommand{\newreptheorem}[2]{%
\newenvironment{rep#1}[1]{%
 \def\rep@title{#2 \ref{##1}}%
 \begin{rep@theorem}}%
 {\end{rep@theorem}}}
\makeatother

\newtheorem{theorem}{Theorem}

\newtheorem{obs}{Observation}

\begin{document}


\title{Topology-independent quantum advantage in communication networks}


\author{Ankush Pandit}
\affiliation{Department of Physics, School of Basic Sciences, Indian Institute of Technology Bhubaneswar, Odisha 752051, India}
\author{V. N. S. Meghanath Ashtakala}
\affiliation{School of Physics, Indian Institute of Science Education and Research Thiruvananthapuram, Kerala 695551, India}
\affiliation{Department of Computer and Systems Sciences, Stockholm University, Sweden 10691}
\author{Leon George Padayatty}
\affiliation{School of Physics, Indian Institute of Science Education and Research Thiruvananthapuram, Kerala 695551, India}
\author{Debashis Saha}
\affiliation{Department of Physics, School of Basic Sciences, Indian Institute of Technology Bhubaneswar, Odisha 752051, India}


\begin{abstract}
Quantum communication can outperform classical communication in a variety of information-processing tasks, but such advantages are typically established for a fixed communication network. Here we introduce \emph{topology-independent quantum advantage}, in which a quantum protocol outperforms every classical protocol subject to the same communication constraint, irrespective of how the classical communication resource is distributed among the parties or which communication topology is employed, even when the parties may share unlimited classical randomness. We first demonstrate this phenomenon through a simple three-party equality task: two senders receive ternary classical inputs, while an input-free receiver determines whether the inputs are equal. We then formulate a general framework for topology-independent communication and completely characterize the corresponding classical correlation polytope in the minimal three-party scenario. Its facet inequalities reveal several instances of topology-independent quantum advantage. Remarkably, one of these inequalities enables the \textit{semi-device-independent certification} of a nonunitary quantum channel, while another certifies a sequential quantum network topology without any prior assumption about the network configuration. Finally, we extend the framework to four-party networks, characterize the relevant classical communication structures, and identify a communication task exhibiting quantum advantages against all admissible classical topologies. Our results establish topology independence as a genuine and robust form of quantum advantage in communication networks.
\end{abstract}

\maketitle


\section{Introduction} 

Communication complexity provides a fundamental framework for quantifying the physical resources required for distributed information processing. In its simplest form, a communication-complexity task involves two spatially separated parties, a sender and a receiver, who receive classical inputs and aim to compute a function of them under a constraint on the amount of communication. Quantum theory can reduce the communication resources required for such tasks: transmitting quantum systems can outperform protocols restricted to classical messages. Such advantages are particularly striking because they arise solely from the communication of low-dimensional quantum systems, without requiring preshared entanglement \cite{Raz,RevModPhys.82.665,DEWOLF,Acin-prl-2010,Tavakoli-prl-2015,ambainis2009,saha2023,saha-pra-2023,cruzeiro2026}.

Quantum advantages based solely on the communication of quantum systems have been extended beyond two-party prepare-and-measure protocols to multiparty communication networks \cite{QF,Zukowski2005,vertesi,Marcin2015,Hameedi_PRA,Saha_EPL,smania2016,Barnik-pra,chakraborty2024,Hameedi-pra-2020,Chaves-prr-2020,Pandit-pra-2026,sasmal2026}. In multiparty scenarios, quantum information may be transmitted through several parties, with intermediate parties performing input-dependent transformations before forwarding the system to subsequent nodes. These networks have also found application in semi-device-independent certification of quantum devices \cite{miklin2020semi,Mohan_self_test_2019,Roy_R_cert_2026,Mukherjee21,mitra2026}.

A common feature of essentially all such communication scenarios is that the \emph{communication topology}, specifying which party communicates with which other party, is fixed a priori. 
The classical and quantum resources are then compared within this prescribed architecture. 
Thus, while the communication resource has been extensively studied, the
\emph{communication topology itself} has so far largely remained part of the
assumed specification of the problem. Here we remove this assumption.

More precisely, suppose that only the total amount of communication is constrained, while a classical protocol is free to distribute this resource among the parties in an arbitrary way. The parties may choose who communicates with whom, how the available communication dimension is distributed among different links, and may even correlate this choice with unlimited shared classical randomness. A quantum protocol exhibiting an advantage against this entire class of classical strategies would therefore demonstrate a form of quantum advantage that is independent of the architecture of the competing classical network.

In this work, we introduce and systematically investigate precisely this notion, which we call \emph{topology-independent quantum advantage}. 
We first demonstrate the phenomenon through a remarkably simple three-party equality task. Alice and Bob independently receive ternary inputs, while Charlie receives no input and wants to determine whether their inputs are equal. The total communication is restricted to two bits. We show that communication of two qubits outperforms every classical strategy satisfying this constraint, irrespective of whether the classical messages are sent in parallel, sequentially, or through a randomized mixture of different communication topologies.

Motivated by this example, we formulate a general framework for topology-independent communication networks. We represent admissible communication structures by weighted sink-rooted directed acyclic graphs, with edge weights specifying the dimensions of the communicated messages. The topology-independent classical correlation set is then obtained as the convex hull of the correlations generated by all deterministic strategies over all admissible graphs, edge weights, party assignments, and causal orderings.

We completely characterize this classical set in the simplest nontrivial three-party scenario. Exploiting redundancies among communication topologies, we reduce the number of structures that need to be considered and obtain the facets of the resulting topology-independent classical polytope. Several of these facet inequalities admit quantum violations using qubit communication. Interestingly, the operational content of these inequalities goes beyond witnessing a communication advantage. For one facet, the maximum value achievable when the intermediate party is restricted to unitary transformations is strictly smaller than that attainable with a general quantum channel. Consequently, its violation beyond the unitary bound provides a semi-device-independent certification that the intermediate operation is necessarily nonunitary. Furthermore, we identify another facet inequality that is satisfied by all quantum correlations attainable with nonsequential communication topologies. Consequently, its violation provides a semi-device-independent certification that the underlying quantum network necessarily implements sequential communication, without requiring any prior knowledge of the network topology or the internal operation of the devices.

Finally, we extend the framework to a four-party network with three senders and one input-free receiver. Although the number of a priori possible communication graphs grows rapidly with the number of parties, structural reductions allow the relevant classical strategies to be represented by a substantially smaller family of topologies. We characterize the corresponding topology-independent classical correlations and identify a communication inequality displaying quantum violations. These results show that topology-independent quantum advantage is not restricted to the minimal three-party example, but constitutes a broader phenomenon in multipartite quantum networks.

\section{A simple example}\label{sec:min-exp}

We begin with a simple yet instructive example exhibiting a topology-independent quantum advantage. Consider a communication network consisting of two senders, Alice and Bob, and a receiver, Charlie. In each run of the protocol, Alice receives an input randomly from the set $x_1\in\{0,1,2\}$ and Bob receives an input randomly from the set $x_2\in\{0,1,2\}$. The parties are allowed to exchange at most two bits of communication in total. Importantly, these two bits may be distributed according to an arbitrary communication topology. For example, Alice may send one bit to Bob and Bob may subsequently send one bit to Charlie, or Alice and Bob may each send one bit directly to Charlie, or any other ways. Furthermore, the parties may share an arbitrary random variable and condition the communication topology on its value. Thus, the classical resources consist of shared randomness together with two bits of communication distributed over an arbitrary directed network.

The goal of Charlie is to guess the binary value function $f(x_1,x_2)=\delta_{x_1,x_2}$, that is, whether the inputs $x_1$ and $x_2$ are the same or not.
It can be verified that, with two bits of communication distributed according to an arbitrary communication topology, any classical strategy can produce the correct output for at most seven of the nine possible input pairs. A rigorous proof is given later.

In contrast, a quantum strategy employing two communicated qubits achieves better success rate in producing the correct answer. Consider the protocol in which Alice encodes her input into a qubit state and sends it to Bob. The encoding is given by $\rho_{x_1} = \ket{\psi_{x_1}}\!\bra{\psi_{x_1}}$, where
\begin{equation}
\ket{\psi_{x_1}} = \frac{1}{\sqrt{2}}\left(\ket{0} + \exp^{-i 2\pi x_1/3}\ket{1}\right), \ x_1=0,1,2,
\label{eq:alice_encoding}
\end{equation}
with $\{\ket{0},\ket{1}\}$ denoting the computational basis.
Upon receiving the qubit, Bob applies a quantum channel $\mathcal{E}_{x_2}$ depending on his input $x_2$. Each channel is specified by two Kraus operators,
$\mathcal{E}_{x_2}(\rho)=K_{x_2}\rho K_{x_2}^\dagger+K_{x_2}'\rho K_{x_2}'^\dagger$.
For $x_2=0$,
\begin{equation}
K_0 = \begin{pmatrix}
    \frac{-1-i}{2} & 0 \\
    0 & \frac{1}{\sqrt{2}}
\end{pmatrix}, \quad 
K_0' = \begin{pmatrix}
    0 & \frac{1}{\sqrt{2}} \\
    \frac{-1+i}{2} & 0
\end{pmatrix}.
\end{equation}
For $x_2=1$,
\begin{equation}
K_1 = \begin{pmatrix}
    -\frac{\sqrt{3}-1}{4} + \frac{\sqrt{3}+1}{4}i & 0 \\
    0 & \frac{1}{\sqrt{2}}
\end{pmatrix}, 
K_1' = \begin{pmatrix}
    0 & \frac{1}{\sqrt{2}} \\
    \frac{\sqrt{3}+1}{4} + \frac{\sqrt{3}-1}{4}i & 0
\end{pmatrix}
\end{equation}
For $x_2=2$,
\begin{equation}
K_2 = \begin{pmatrix}
    \frac{\sqrt{3}+1}{4} - \frac{\sqrt{3}-1}{4}i & 0 \\
    0 & \frac{1}{\sqrt{2}}
\end{pmatrix},
K_2' = \begin{pmatrix}
    0 & \frac{1}{\sqrt{2}} \\
   - \frac{\sqrt{3}-1}{4} - \frac{\sqrt{3}+1}{4}i & 0
\end{pmatrix}.
\end{equation}
Bob then forwards the resulting qubit to Charlie. Charlie performs the measurement in the basis $ \left\{
\ket{\phi}\!\bra{\phi},
\; \ket{\phi^\perp}\!\bra{\phi^\perp}
\right\},$
where 
\be 
\ket{\phi} = -\frac{1+i}{2}\ket{0}
+\frac{1}{\sqrt{2}}\ket{1}.
\ee 
Charlie outputs $0$ whenever the outcome corresponding to
$\ket{\phi}$ occurs, and outputs $1$ otherwise.

A direct calculation shows that the resulting conditional probabilities are given by,
\begin{eqnarray}
    & p(0|x_1,x_2) = 1, \text{ whenever } x_1=x_2; \nonumber \\ 
    & p(1|x_1,x_2) = \frac34, \text{ whenever } x_1\neq x_2.
\end{eqnarray}
The average success rate of this task is witnessed by the following expression,
\begin{equation}
\label{eq:I-1}
\mathcal{I}= \frac{1}{9} \sum_{x_1,x_2}
p\!\left(\delta_{x_1,x_2}\mid x_1,x_2\right).
\end{equation}
Every classical protocol employing at most two communicated bits, irrespective of the communication topology, satisfies $\mathcal{I}\le 7/9$, whereas the quantum protocol described above attains the algebraic maximum of $\mathcal{I}=5/6$. This immediately leads to the following observation.
\begin{obs}
There exists a bounded-dimensional communication task in a topology-independent communication network consisting of three parties, in which the output party receives `no input', such that quantum communication outperforms classical.
\end{obs}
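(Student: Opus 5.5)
The plan has two independent parts: check that the quantum protocol above reaches $\mathcal{I}=5/6$, and prove that every classical protocol obeys $\mathcal{I}\le 7/9$, whatever topology it uses.

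\emph{Quantum value.} I would compute $\mathcal{E}_{x_2}(\rho_{x_1})$ explicitly and evaluate the overlap with $\ket{\phi}$. Each channel has one diagonal and one antidiagonal Kraus operator, so the output state is again diagonal-plus-coherence in the computational basis. The overlap $\bra{\phi}\mathcal{E}_{x_2}(\rho_{x_1})\ket{\phi}$ therefore reduces to a trigonometric expression in the phase difference $2\pi(x_1-x_2)/3$. This expression should equal $1$ when $x_1=x_2$ and $1/4$ otherwise, so $p(1|x_1,x_2)=3/4$ for $x_1\neq x_2$. Substituting into Eq.~\eqref{eq:I-1} gives $\mathcal{I}=(3\cdot 1+6\cdot\tfrac34)/9=5/6$. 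This part is routine.

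\emph{Reduction of the classical case.} Shared randomness, including randomness that selects the topology, yields a convex mixture of deterministic protocols on fixed weighted graphs. Since $\mathcal{I}$ is linear, it suffices to show that every deterministic protocol on a fixed graph errs on at least two of the nine input pairs. A deterministic protocol outputs a fixed function $c(x_1,x_2)\in\{0,1\}$. Charlie has no input, so any message he sends or relays is itself a deterministic function of what he received, and adds nothing new.

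\emph{Two combinatorial facts.} The row pattern of $\delta_{x_1,x_2}$ is what drives both.
\begin{enumerate}
\item[(a)] If $c$ does not depend on $x_2$, then each row $x_1$ has exactly one correct answer equal to $1$ and two equal to $0$. A constant row therefore errs at least once, giving at least $3$ errors.
\item[(b)] Suppose $c(x_1,x_2)=g(m(x_1),x_2)$ with $m$ taking at most two values. Then two inputs collide, say $m(0)=m(1)$. The rows $x_1=0$ and $x_1=1$ of $c$ coincide, but the correct rows differ at the columns $x_2=0$ and $x_2=1$. This forces at least $2$ errors.
\end{enumerate}
The symmetric versions with Alice and Bob exchanged hold in the same way.

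\emph{Case analysis over topologies.} The graph is acyclic, so some sender, say Alice, is first in the causal order among Alice and Bob. She receives nothing except possibly input-free data from Charlie, so all her outgoing messages are functions of $x_1$ alone. The total budget is two bits, so her combined outgoing alphabet has size at most $4$.
\begin{enumerate}
\item[(i)] If Alice uses the full budget (one 4-level message, or two binary messages), Bob sends nothing. Then $c$ is independent of $x_2$, and fact (a) gives at least $3$ errors.
\item[(ii)] If Alice sends at most one bit in total, everything downstream depends on $x_1$ only through a binary function $m(x_1)$. Fact (b) then gives at least $2$ errors.
\item[(iii)] If Alice sends nothing, $c$ is independent of $x_1$, which is even worse.
\end{enumerate}
Every deterministic protocol thus errs at least twice, so $\mathcal{I}\le 7/9<5/6$, which establishes the Observation. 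A parallel protocol using one bit from each sender is consistent with this bound: it can achieve seven correct answers, so $7/9$ is tight.

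\emph{Main obstacle.} The delicate step is making sure the case split covers every admissible topology: relays through Charlie, Alice sending to both Bob and Charlie, and messages of non-binary dimension. Arguing through the causally first sender and her total outgoing alphabet size is what makes the enumeration exhaustive without listing graphs one by one.
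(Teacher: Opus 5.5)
Your proposal is correct, and for the classical bound it takes a genuinely different route from the paper.

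The paper never argues $\mathcal{I}\le 7/9$ directly. It first reduces to the four topologies of Fig.~\ref{fig:1}, using Theorem~\ref{lem:1} (a parallel protocol can be simulated sequentially with an $n_z$-dimensional second message) and the edge count in Theorem~\ref{lem:2}. It then enumerates all deterministic strategies on these topologies and computes the full $[(3,3),2,4]$ polytope with SageMath/PANDA. Finally it reads off the facet $\mathcal{I}_1\le 1$, which is equivalent to $9\mathcal{I}=6+\mathcal{I}_1\le 7$.

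You bypass both the simulation lemma and the computer enumeration with a pigeonhole argument on the causally first sender. Either one sender's input never reaches Charlie, giving three errors by fact (a). Or the first sender's influence passes through a single bit, which merges two of her three inputs and gives two errors by fact (b).

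\begin{itemize}
\item \textbf{What your route gives.} It is self-contained and checkable by hand, treats every admissible weighted DAG at once without listing graphs, and makes the mechanism behind $7/9$ transparent. Your tightness example with one bit from each sender is also right.
\item \textbf{What the paper's route gives.} It yields the entire facet structure. It certifies that $\mathcal{I}\le 7/9$ is a facet, not merely a valid bound. The same computation also produces the other ten inequivalent inequalities (e.g.\ $\mathcal{I}_5$, $\mathcal{I}_8$) needed later for the nonunitarity and topology certifications; your case analysis would have to be redone for each of these separately.
\end{itemize}

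Two small bookkeeping remarks:
\begin{itemize}
\item If the budget is read as $\sum d_{(i,j)}\le 4$ rather than $=4$, a lone ternary message from Alice fits neither your case (i) nor case (ii) as worded. Phrase (i) as ``Alice's messages leave less than dimension $2$ for Bob''; the same argument then covers it.
\item Like the paper, your value $5/6$ uses the convention that $z=0$ signals equality. Eq.~\eqref{eq:I-1} read literally would reward $z=1$ on the diagonal. Your classical argument does not depend on this convention, but the substitution giving $5/6$ does.
\end{itemize}
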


\section{Topology-independent communication network}
We now introduce the general framework of topology-independent communication networks. Throughout the paper, we denote by
$[K]=\{0,1,\ldots,K-1\}$ the set of integers modulo $K$. Consider a network consisting of $N$ sender parties $S_1,\ldots,S_{N}$ and one receiver $R$. In each run of the protocol, every sender $S_i$ receives an input
$x_i\in[n_{x_i}]$, while the receiver produces an output $z\in[n_z]$. We emphasize that, throughout this work, the receiver is assumed not to receive an external input. Although the framework can be straightforwardly extended to include an input at the receiver, none of the quantum advantages presented here require such a scenario. The communication task is completely specified by the conditional probability distribution $p(z|\vec{x})$, where
$\vec{x}=(x_1,\ldots,x_{N})$ collects the inputs of all sender parties.

The defining feature of the present framework is that the communication topology is \emph{not fixed}. Instead, the communicating parties are allowed to distribute the communication resources according to an arbitrary topology. Consequently, every protocol compatible with the prescribed communication resources is regarded as admissible. We therefore refer to this scenario as a \emph{topology-independent communication network}. The only resource constraint is the total dimension of the communicated messages. We denote the corresponding communication scenario by $[\vec{n}_x,n_z,d]$, where
$\vec{n}_x=(n_{x_1},\ldots,n_{x_{N}})$ specifies the input alphabet sizes of the sender parties.

A communication topology is naturally represented by a directed acyclic graph (DAG). Each vertex corresponds to one party, while a directed edge $(i,j)$ represents communication from party $i$ to party $j$. Since communication follows a causal order, directed cycles are forbidden. Every edge is assigned a positive integer weight $d_{(i,j)}\ge2$, representing the dimension of the communicated message. An absent edge is interpreted as the absence of communication.

Since the receiver is distinguished from the sender parties, every relevant communication topology can be represented as a \textit{weighted sink-rooted DAG}, whose unique sink corresponds to $R$. The edge weights satisfy $d_{(i,j)}\ge2$, together with the resource constraint $\sum_{(i,j)} d_{(i,j)}=d.$

Not every sender needs to participate in the communication. If a sender is not connected to the receiver by any directed path, then its communicated messages cannot influence the output distribution $p(z|\vec{x})$. Consequently, without loss of generality, such disconnected components may be discarded. Therefore, every relevant communication topology is represented by a \textit{weighted sink-rooted DAG} containing between two and $N+1$ vertices.

Besides the graph itself, a communication protocol also requires a causal ordering of the communication events. Any ordering consistent with the partial order induced by the directed graph is admissible. Different linear extensions of this partial order generally correspond to distinct communication protocols.

\begin{obs}
The complete set of communication topologies compatible with the scenario
$[\vec{n}_x,n_z,d]$ is obtained by considering
\begin{enumerate}
\item all weighted sink-rooted directed acyclic graphs containing between two and $N+1$ vertices;
\item all assignments of the sender parties to the non-root vertices, i.e., all permutations of the labels $\{S_i\}_{i=1}^{N}$;
\item all feasible edge-weight assignments satisfying $d_{(i,j)}\ge2$ and $\sum_{(i,j)} d_{(i,j)}=d.$
\item all causal orderings consistent with the partial order defined by the directed graph.
\end{enumerate}
\end{obs}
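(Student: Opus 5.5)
The statement asserts completeness: every classical protocol compatible with $[\vec{n}_x,n_z,d]$ has a description by the data in items 1--4, and every such datum defines an admissible protocol. I will show both inclusions. The reverse one is immediate. A weighted sink-rooted DAG with at most $N+1$ vertices, a labelling of its non-root vertices by distinct senders, weights $d_{(i,j)}\ge2$ summing to $d$, and a linear extension of the induced partial order together specify who sends what dimension to whom and when. The total communicated dimension is $d$, so this is a valid protocol in the scenario.

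For the forward inclusion, I start from an arbitrary deterministic classical protocol; shared randomness only convexifies, and will be handled at the end. I record each transmission as a triple (sender, recipient, alphabet size). First I normalise the record. Transmissions with alphabet size $1$ carry no information and can be deleted, which justifies $d_{(i,j)}\ge2$. Several transmissions along the same ordered pair can be merged into a single edge. Weights add under this merging, matching the additive resource constraint $\sum d_{(i,j)}=d$ as the paper defines it. If the budget is not saturated, I pad an existing edge, which can only enlarge the achievable set. This leaves a weighted directed multigraph collapsed to a simple graph.

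Next I argue acyclicity. Each party acts at definite times and a message can only influence later actions, so the time-stamped sequence of events gives a linear order. Every edge respects this order, hence no directed cycle exists. The same time order is a linear extension of the partial order, which yields item 4. I then prune every vertex with no directed path to $R$. Its messages cannot affect $z$, because the output is a function only of $R$'s received messages, and those depend only on ancestors of $R$. What remains has $R$ as its unique sink. Since at least one sender must communicate for a nontrivial protocol, it has between $2$ and $N+1$ vertices (item 1). The sender labels on the surviving vertices give item 2.

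The main obstacle is the merging step. A party may send to $j$, then receive, then send to $j$ again, so two transmissions on one edge are interleaved with other events, and collapsing them to one edge could seemingly destroy the causal order. My approach is to check that the merged protocol is not needed to be equivalent. It suffices that the deterministic behaviour is reproduced by some admissible datum. Where interleaving matters, I would instead keep the events as distinct edges of a time-expanded graph and appeal to the observation's allowance of all orderings. I expect this to need care, and the paper likely treats it as definitional. Finally, shared randomness $\lambda$ selects one of these deterministic data, so the full classical set is the convex hull over the listed topologies, as claimed.
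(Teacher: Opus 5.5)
Your reverse inclusion, the deletion of alphabet-size-$1$ transmissions, the pruning of senders with no directed path to $R$, and the use of a linear extension for item 4 are all fine. The forward inclusion as you set it up cannot be completed, however. For the class of transcripts you start from (several messages per ordered pair, parties acting more than once), it is actually false under the additive budget.

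\textbf{Merging.} Two messages with alphabets $d_1,d_2$ on the same ordered pair jointly carry $d_1d_2$ values. The merged edge of weight $d_1+d_2$ carries only $d_1+d_2$, which is strictly fewer unless $d_1=d_2=2$. Take one sender with $n_{x_1}=n_z=8$ and $d=6$. Three binary messages $S_1\to R$ (total weight $6$) let $R$ output $z=x_1$ on all eight inputs. The only listed topology, a single edge of weight $6$, gives $\sum_{x_1}p(z=x_1|x_1)\le 6$.

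\textbf{Acyclicity.} A time order on events is not a time order on parties. The sequence $S_1\to S_2$, $S_2\to S_1$, $S_1\to R$ is perfectly time-ordered, yet it contains a directed cycle on the parties. Your fallback, a time-expanded graph, carries repeated party labels and can have more than $N+1$ vertices, so it is not covered by items 1--2. The obvious way to flatten it into a DAG on the parties is to have $S_2$ forward its reply for every possible first message. That costs $b^{a}$ rather than $a+b$ for alphabets $a,b$, which in general exceeds the budget.

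The paper does not derive the observation from a more primitive notion of protocol; it is essentially definitional. A topology is modelled from the outset as a weighted DAG on the parties, with a single message of dimension $d_{(i,j)}$ per edge. Each party acts once, after all its incoming messages. Acyclicity is therefore imposed by the causal order rather than proved. The only reductions the paper makes are discarding senders that have no directed path to $R$ and admitting every linear extension.

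If you adopt those conventions as the definition of a protocol compatible with $[\vec{n}_x,n_z,d]$, the merging step and the time-expanded graph become unnecessary, and the rest of your argument is the paper's reasoning. One small repair is still needed: pruning removes edges and lowers $\sum d_{(i,j)}$, so the padding that saturates $d$ must be done after pruning, not before.
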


A particularly important subclass consists of sink-rooted trees, namely DAGs containing exactly $N$ edges and $N+1$ vertices. In this case, the causal ordering is unique up to the ordering of vertices having the same depth. Consequently, it is sufficient to consider the natural ordering in which communication proceeds from larger depth towards the root.

The topology-independent classical correlation set is obtained by considering every deterministic communication strategy compatible with every admissible communication topology. Each deterministic strategy defines a vertex of the correlation polytope through the conditional probabilities
$p(z|\vec{x})$. The convex hull of all such vertices over all admissible topologies therefore defines the topology-independent classical communication polytope. By construction, this polytope contains the classical correlation polytope associated with every fixed communication topology as a convex subset.

\section{An explicit study of the three-party scenario} \label{sec:3party}
Since the receiver has no input and the parties are allowed to share arbitrary classical randomness, the Frenkel--Wehner theorem~\cite{fw} rules out a quantum advantage in the one-sender--one-receiver scenario. A topology-independent quantum advantage therefore requires at least two senders and one receiver. Moreover, in a three-party network the smallest nontrivial communication resource corresponds to two communicated bits, or equivalently total communication dimension $d=4$. For $d<4$, at most one nontrivial binary message can be communicated, effectively reducing the problem to a one-sender--one-receiver scenario.

The scenarios $[(2,2),n_z,4]$ and $[(2,3),n_z,4]$ likewise cannot exhibit a topology-independent quantum advantage. In the former, the two senders can communicate their complete inputs to the receiver using one bit each. In the latter, consider the sequential topology $A\rightarrow B\rightarrow C$: Alice can communicate her complete binary input to Bob, who can then regard $(x_1,x_2)$ as a single local input before communicating with Charlie. The resulting protocol is therefore effectively a one-sender--one-receiver communication problem, for which the Frenkel--Wehner theorem~\cite{fw} precludes a quantum advantage. This leads to the following theorem.
\begin{obs}
The simplest scenario admitting a topology-independent quantum advantage is $[(3,3),2,4]$.
\end{obs}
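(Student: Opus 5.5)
The statement has two halves. First, no smaller scenario admits a topology-independent quantum advantage. Second, $[(3,3),2,4]$ itself does. I would prove each half separately, reducing every "smaller" case to the Frenkel--Wehner theorem~\cite{fw}. Here "smaller" is ordered by the number of parties, then by $d$, then by the input alphabets.

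For the minimality half, I would proceed case by case.
\begin{enumerate}
\item With a single sender, the scenario is a prepare-and-measure setting with an input-free receiver and shared randomness. Frenkel--Wehner states that any correlation obtainable with a $d$-dimensional quantum message is obtainable with a $d$-level classical message. So no advantage is possible.
\item With two senders and $d<4$, the constraints $d_{(i,j)}\ge 2$ and $\sum d_{(i,j)}=d$ allow at most one edge. Hence at most one sender influences $R$, and we are back to the single-sender case. This holds for both classical and quantum protocols under the same edge constraints.
\item With $d=4$ and $(n_{x_1},n_{x_2})=(2,2)$, the parallel topology with two one-bit edges lets $R$ learn $\vec x$ exactly. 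Then $R$ can output any $p(z|\vec x)$ using local randomness, so the classical set is already everything.
\item With $d=4$ and $(2,3)$, or $(3,2)$ by relabeling, I would use $A\to B\to C$ with $d_{(A,B)}=2$. Then $B$ knows $(x_1,x_2)$ and acts as a single sender with a six-valued input and a qubit-sized message to $C$.
\end{enumerate}

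The subtle point in the last case is the direction of the comparison. I must show that every quantum correlation in $[(2,3),n_z,4]$, in any quantum topology, is reproduced classically. A quantum protocol either splits the dimension as $2+2$, sequentially or in parallel, or concentrates all $d=4$ on a single edge, which puts only one sender effectively in play.

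I would handle each quantum topology in turn.
\begin{enumerate}
\item One ququart edge: Frenkel--Wehner replaces it by a 4-level classical edge. That is an admissible classical topology, since one edge of weight $4$ is allowed.
\item Two qubits sent in parallel: each sender's qubit ensemble is replaced classically. For product preparations followed by a joint measurement, I would invoke Frenkel--Wehner iteratively. Fix Bob's state and view Alice plus Charlie as prepare-and-measure, then do the same for Bob. This yields a classical parallel protocol, since convex combinations over shared randomness are allowed.
\item Sequential $A\to B\to C$ with Alice binary: Bob's output state is a qubit depending on $(x_1,x_2)$ only. So the map from the pair $(x_1,x_2)$ to $C$ is a single-sender qubit protocol, and Frenkel--Wehner gives a classical bit from Bob to $C$. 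Together with Alice's one classical bit telling Bob $x_1$, this is an admissible classical strategy.
\item Sequential $B\to A\to C$: here the intermediate party's input is the binary one, and the same argument applies. Bob's ternary input cannot be sent fully with one bit, so I would try instead the classical topology $A\to B\to C$. Since the final state depends only on $(x_1,x_2)$, the direction of the quantum chain is irrelevant for the classical simulation.
\end{enumerate}

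Taken together, any smaller scenario has classical set equal to the quantum set.

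For the existence half, I would take the example of Sec.~\ref{sec:min-exp} with $n_z=2$, $\vec n_x=(3,3)$, $d=4$. Its quantum value $\mathcal I=5/6$ follows from a direct computation with the given states, Kraus maps and measurement. The remaining step, and the main obstacle, is the classical bound $\mathcal I\le 7/9$ over all admissible topologies. I would establish it by enumerating weighted sink-rooted DAGs with $\sum d_{(i,j)}=4$:
\begin{enumerate}
\item a single edge of weight $4$ from one sender;
\item parallel $A\to C$, $B\to C$ with bits;
\item sequential chains $A\to B\to C$ and $B\to A\to C$ with bits;
\item $A\to B$ with $A\to C$, and similar graphs in which $C$ gets only one bit.
\end{enumerate}
Since $\mathcal I$ is linear, it suffices to bound deterministic strategies in each topology.

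In every case the last message to $C$ is at most one bit from a party that lacks full knowledge of $(x_1,x_2)$, or $C$ receives two one-bit summaries. The equality indicator on $3\times3$ then can be matched on at most $7$ entries. In the parallel case, the two bits partition each side's three inputs into two classes, forcing at least two errors. In the sequential case, Bob knows one bit about $x_1$ and must output a single bit, again forcing at least two errors. I expect this combinatorial case analysis, especially confirming that all DAGs and causal orderings are covered, to be the hardest part. I would verify it by exhaustive enumeration of deterministic strategies, which is finite and small.
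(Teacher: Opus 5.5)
The gap is in your minimality half, in your treatment of two qubits sent in parallel in $[(2,3),n_z,4]$. Your existence half is sound. Your row-collision count is a more elementary route to $\mathcal{I}\le 7/9$ than the paper's facet enumeration, where the bound appears as $\mathcal{I}_1\le 1$: Charlie's output is a function of $(m(x_1),x_2)$, of $(x_1,m(x_2))$, or of a single input, so at least two of the nine entries are wrong.

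Frenkel--Wehner cannot be iterated in the parallel case. Once Bob's state is fixed, Charlie's effective POVM on Alice's qubit depends on $x_2$. You are then in a prepare-and-measure scenario \emph{with} a measurement setting, to which the theorem does not apply. The per-$x_2$ classical simulations use different encodings, and they cannot be glued because Alice does not know $x_2$. In fact no classical parallel simulation exists in general. Let Alice send $\ket{x_1}$, and let Charlie read $x_1$ and measure Bob's qubit in $\sigma_x$ (for $x_1=0$) or $\sigma_z$ (for $x_1=1$). Let Bob send Bloch vectors $(1,1)/\sqrt2$, $(1,-1)/\sqrt2$, $(-1,0)$ in the $xz$-plane. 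With $E_{x_2x_1}$ the $\pm1$ correlator, this gives $E_{00}+E_{01}+E_{10}-E_{11}-E_{20}=1+2\sqrt2$. Every strategy in which Charlie knows $x_1$ and one bit about $x_2$ gives at most $3$; this is the Gallego et al.\ dimension witness.

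For $n_z=2$ this does no harm, and you do not need Frenkel--Wehner at all. In the classical chain $A\to B\to C$, Alice sends $x_1$, and Bob, knowing $(x_1,x_2)$, samples $z$ from the target distribution and sends it. This reproduces every correlation, and the same works for any $(2,n)$ inputs. The paper's own chain argument is also complete only in this case, since Theorem~\ref{lem:1} then simulates the parallel topology within $d_A+n_z=4$.

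Your ordering of scenarios, however, ignores $n_z$. So you are asserting no advantage in $[(2,3),n_z,4]$ for every $n_z$, and that is false. Take $n_z=3$ and relabel the protocol above so that $z\in\{0,1\}$ when $x_1=0$ and $z\in\{0,2\}$ when $x_1=1$. Any classical decomposition of this point must respect these zeros. The classical strategies then fare as follows:
\begin{itemize}
\item In either chain, Charlie's output is a function of one bit and takes at most two values. This forces at least one $x_1$-branch to be constant, which caps the witness at $3$.
\item Parallel bits give at most $3$.
\item Single-edge strategies give at most $1$.
\end{itemize}
Hence parallel qubits give a topology-independent advantage in $[(2,3),3,4]$. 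To rescue the statement you must build $n_z$ into your notion of ``simplest'', for instance by putting binary output first. You should then replace your quantum-topology case analysis with the one-line chain argument above.
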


Before proceeding, we establish two useful theorems that substantially simplifies the characterization of two-sender communication scenarios. The proofs of these two theorems are provided in Appendix~\ref{App:proofs}.
\begin{theorem}\label{lem:1}
Consider the topology $A\rightarrow C,\;B\rightarrow C$, where Alice and Bob communicate messages of dimensions $d_A$ and $d_B$,
respectively, and Charlie produces an output $z\in[n_z]$.
Every classical or quantum correlation achievable in this topology can be reproduced in the sequential topology
$A\rightarrow B\rightarrow C$ using communication dimensions
$d_A$ and $n_z$ on the two respective edges. Analogously, the same
correlation can be reproduced in $B\rightarrow A\rightarrow C$ using
dimensions $d_B$ and $n_z$.
\end{theorem}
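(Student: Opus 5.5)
The idea is to let Bob perform Charlie's final measurement himself. Bob receives Alice's message, combines it with his own input (and any shared randomness), and computes or samples the output $z$ directly. He then forwards $z$ to Charlie in an $n_z$-dimensional message, and Charlie simply reads it off. The plan is to show that this simulation reproduces $p(z|\vec{x})$ exactly, first in the classical case and then in the quantum case. The statement for $B\rightarrow A\rightarrow C$ then follows by exchanging the roles of Alice and Bob.

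\emph{Classical case.} By convexity it suffices to treat deterministic strategies with a fixed value $\lambda$ of the shared randomness, since mixing over $\lambda$ commutes with the construction. In the parallel topology Alice sends $a=f_A(x_1,\lambda)\in[d_A]$, Bob sends $b=f_B(x_2,\lambda)\in[d_B]$, and Charlie outputs $z=g(a,b,\lambda)$. In the sequential protocol, Alice sends the same $a$ to Bob. Bob knows $x_2$ and $\lambda$, so he can compute $b$ and then $z=g(a,b,\lambda)\in[n_z]$, which he sends to Charlie. Charlie outputs the message he receives. The edge dimensions used are $d_A$ and $n_z$, and the output distribution is unchanged.

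\emph{Quantum case.} Alice prepares $\rho_{x_1}$ on $\mathbb{C}^{d_A}$ and Bob prepares $\sigma_{x_2}$ on $\mathbb{C}^{d_B}$. Charlie measures a POVM $\{M_z\}$ on $\mathbb{C}^{d_A}\otimes\mathbb{C}^{d_B}$, possibly conditioned on shared randomness, giving $p(z|x_1,x_2)=\Tr[M_z(\rho_{x_1}\otimes\sigma_{x_2})]$. In the sequential protocol, Alice sends $\rho_{x_1}$ to Bob. Bob applies the measure-and-prepare channel
\[
\mathcal{E}_{x_2}(\rho)=\sum_z \Tr\!\left[M_z(\rho\otimes\sigma_{x_2})\right]\ket{z}\!\bra{z},
\]
which maps $d_A$-dimensional states to $n_z$-dimensional ones. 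Charlie then measures in the computational basis.

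The main step is to check that $\mathcal{E}_{x_2}$ is a legitimate CPTP map. Define $N^{(x_2)}_z=\Tr_B[M_z(\mathbbm{1}\otimes\sigma_{x_2})]$. Each $N^{(x_2)}_z$ is positive, because partial trace against a positive operator preserves positivity. They sum to the identity, because $\sum_z M_z=\mathbbm{1}$ and $\Tr\sigma_{x_2}=1$. So $\{N^{(x_2)}_z\}$ is a POVM on Alice's system, and $\mathcal{E}_{x_2}$ is the measure-and-prepare channel built from it, which is CPTP.

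Charlie's computational-basis measurement then yields exactly $p(z|x_1,x_2)$. Any shared randomness can be handled by letting Bob's POVM and state depend on $\lambda$, and this again commutes with the construction.
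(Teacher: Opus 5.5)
Your proposal is correct and takes essentially the same route as the paper: Bob absorbs Charlie's decoding, classically by computing $z$ from $a$, $x_2$ and $\lambda$, and quantumly by measuring an induced POVM on Alice's system, and then forwards the $n_z$-valued outcome to Charlie, who outputs it. The paper writes the effective effect as $E_{z|y}=\Tr_B[(\mathbb{I}\otimes\sigma_y^{1/2})M_z(\mathbb{I}\otimes\sigma_y^{1/2})]$, which equals your $N^{(x_2)}_z$ by cyclicity of the partial trace over $B$; this form makes the positivity manifest, whereas your justification is loose because $M_z(\mathbbm{1}\otimes\sigma_{x_2})$ is not itself a positive operator.
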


\begin{figure}[htbp]
    \centering
    \includegraphics[width=0.9\linewidth]{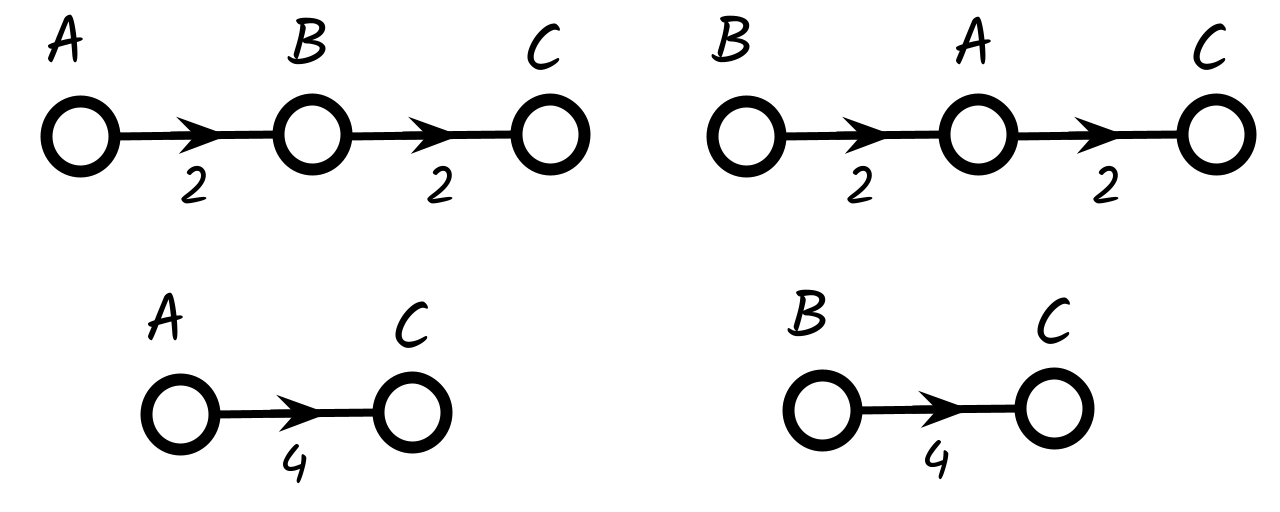}

    \caption{All communication topologies sufficient to characterize the topology-independent classical correlations in the $[(3,3),2,4]$ scenario. The nodes $A$, $B$, and $C$ represent the parties, while directed edges indicate the allowed communication channels, with the corresponding message dimensions indicated alongside the edges.}\label{fig:1}
    \label{fig:placeholder}
\end{figure}

\begin{theorem}\label{lem:2}
    For the scenario $[(3,3),2,4]$, it is sufficient to consider only four topologies, shown in Fig. \ref{fig:1}. 
\end{theorem}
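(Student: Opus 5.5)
The plan is to enumerate every admissible weighted sink-rooted DAG for $[(3,3),2,4]$ using the resource constraint $\sum_{(i,j)} d_{(i,j)}=4$ with $d_{(i,j)}\ge2$. We then show that every deterministic strategy on a topology outside a four-element family is reproduced by a strategy on a topology inside it. The expected family, and what we take Fig.~\ref{fig:1} to show, is: $A\rightarrow C$ with weight $4$; $B\rightarrow C$ with weight $4$; $A\rightarrow B\rightarrow C$ with weights $(2,2)$; and $B\rightarrow A\rightarrow C$ with weights $(2,2)$. Since the topology-independent polytope is the convex hull of deterministic vertices over all admissible structures, it suffices to show that every vertex generated by any other structure is also generated by one of these four.

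\emph{Enumeration.} Each edge has weight at least $2$ and the weights sum to at most $4$, so a graph has either one edge of weight $2,3,4$ or exactly two edges of weight $2$. A graph with three or more edges, or with two edges of which one has weight $\ge3$, violates the budget.

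\begin{itemize}
\item \emph{One edge.} After discarding components with no path to $C$, the only relevant single-edge graphs are $A\rightarrow C$ and $B\rightarrow C$ with weight $d'\in\{2,3,4\}$. A message of dimension $d'<4$ embeds into a message of dimension $4$ by leaving symbols unused, classically and quantumly. So the weight-$4$ versions dominate.
\item \emph{Two edges of weight 2.} We list every DAG on $\{A,B,C\}$ with sink $C$ and two edges. These are $A\rightarrow C,\,B\rightarrow C$; $A\rightarrow B\rightarrow C$; and $B\rightarrow A\rightarrow C$. In the remaining cases, $A\rightarrow B,\,A\rightarrow C$ and $B\rightarrow A,\,B\rightarrow C$, one sender has no outgoing path to $C$. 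That sender's incoming message cannot influence $z$, so these reduce to the single-edge case with weight $2$. Edges leaving $C$ are excluded because $C$ is the sink.
\end{itemize}

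\emph{Reductions.} The parallel topology $A\rightarrow C,\,B\rightarrow C$ with $d_A=d_B=2$ is absorbed by Theorem~\ref{lem:1}. Since $n_z=2$, every correlation in it is reproduced in $A\rightarrow B\rightarrow C$ with dimensions $(d_A,n_z)=(2,2)$, which is again within the budget. For the two chain topologies and the single-edge topologies the partial order is total, so item 4 of the observation on admissible topologies contributes a single causal ordering each. Permutations of sender labels are already accounted for by keeping both $A\rightarrow B\rightarrow C$ and $B\rightarrow A\rightarrow C$, and both $A\rightarrow C$ and $B\rightarrow C$.

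\emph{Main obstacle.} The step needing the most care is making the enumeration exhaustive and correctly handling partially connected graphs. One must argue carefully that an edge into a sender with no path to $C$ can be deleted without changing $p(z|\vec{x})$, even though that sender's own input then becomes irrelevant. One must also confirm that the dimension embedding argument is valid for deterministic classical encodings: relabel the message alphabet injectively. The rest is a direct application of Theorem~\ref{lem:1} and of convexity.
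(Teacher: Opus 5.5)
Your proposal is correct and follows essentially the same route as the paper: the budget $\sum d_{(i,j)}=4$ with $d_{(i,j)}\ge2$ rules out every three-edge graph, and Theorem~\ref{lem:1} absorbs the parallel topology $A\rightarrow C,\;B\rightarrow C$ into $A\rightarrow B\rightarrow C$ with dimensions $(d_A,n_z)=(2,2)$. Your explicit handling of lower-weight single edges and of the two-edge graphs in which one sender has no path to $C$ is extra care that the paper leaves implicit in its sink-rooted-DAG and disconnected-component conventions.
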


For a fixed topology, the most general classical strategy of each party is specified by a conditional probability distribution for its outgoing message, conditioned on its local input, all previously received messages, and the shared random variable. Allowing the parties to correlate both their strategies and the choice of communication topology through shared randomness, the general classical correlation can be written as
\begin{eqnarray}
\label{pc3}
p_{\mathcal{C}}(z|x_1,x_2)
&=&
\sum_{\lambda_1} q_{\lambda_1} \,
p_{\lambda_1}^{A\rightarrow B\rightarrow C}
(z|x_1,x_2,\lambda_1)
 \nonumber \\ 
&& + \sum_{\lambda_2} q_{\lambda_2} \,
p_{\lambda_2}^{B\rightarrow A\rightarrow C}
(z|x_1,x_2,\lambda_2)
\nonumber\\
&&+
\sum_{\lambda_3} q_{\lambda_3} \,
p_{\lambda_3}^{A\rightarrow C}
(z|x_1,x_2,\lambda_3) \nonumber \\
&& + 
\sum_{\lambda_4} q_{\lambda_4} \,
p_{\lambda_4}^{B\rightarrow C}
(z|x_1,x_2,\lambda_4),
\end{eqnarray}
where
$q_{\lambda_i}>0,
\ 
\sum_i\sum_{\lambda_i}q_{\lambda_i}=1.$
For the two sequential topologies, in which one bit is communicated along each edge,
\begin{eqnarray}
&& p_{\lambda_1}^{A\rightarrow B\rightarrow C}
(z|x_1,x_2,\lambda_1) \nonumber \\
&=&
\sum_{m_A,m_B\in\{0,1\}}
p(m_A|x_1,\lambda_1)
p(m_B|m_A,x_2,\lambda_1)
p(z|m_B,\lambda_1) \nonumber \\
\end{eqnarray}
and
\begin{eqnarray}
&& p_{\lambda_3}^{B\rightarrow A\rightarrow C}
(z|x_1,x_2,\lambda_2) \nonumber \\
&=&
\sum_{m_B,m_A\in\{0,1\}}
p(m_B|x_2,\lambda_2)
p(m_A|m_B,x_1,\lambda_2)
p(z|m_A,\lambda_2). \nonumber \\
\end{eqnarray}
When a single sender uses the entire two-bit communication resource to communicate directly with Charlie, while the other sender does not participate, the corresponding correlations are
\begin{equation}
p_{\lambda_4}^{A\rightarrow C}
(z|x_1,x_2,\lambda_4)
=
\sum_{m\in\{0,1,2,3\}}
p(m|x_1,\lambda_4)
p(z|m,\lambda_4),
\end{equation}
and
\begin{equation}
p_{\lambda_5}^{B\rightarrow C}
(z|x_1,x_2,\lambda_5)
=
\sum_{m\in\{0,1,2,3\}}
p(m|x_2,\lambda_5)
p(z|m,\lambda_5).
\end{equation}

The scenario contains $18$ conditional probabilities
$\{p(z|x_1,x_2)\}$. Since Charlie's output is binary, normalization allows us to work in the nine-dimensional representation
$p_{x_1,x_2}:=p(0|x_1,x_2)$,
with $p(1|x_1,x_2)=1-p(0|x_1,x_2).$ Thus, any linear figure of merit can be expressed entirely in terms of the nine probabilities $\{p_{x_1,x_2}\}$.

For each fixed topology, it is sufficient to enumerate deterministic communication strategies, since arbitrary stochastic strategies are obtained as convex combinations thereof. The topologies $A\rightarrow C,\;B\rightarrow C$, $A\rightarrow C$, and $B\rightarrow C$ each generate $1024$ deterministic strategies for $\{p(z|x_1,x_2)\}$, whereas the sequential topologies $A\rightarrow B\rightarrow C$ and $B\rightarrow A\rightarrow C$ each generate $2048$.
The topology-independent classical correlation polytope is obtained by taking the convex hull of the deterministic correlations generated by all five topologies in Eq.~\eqref{pc3}. Since different deterministic strategies, including strategies associated with different topologies, may generate identical correlations, duplicate points are removed before constructing the convex hull. We then determine the facets of the resulting polytope using \texttt{SageMath / PANDA}. This procedure yields $330$ facet inequalities, from which trivial constraints, such as
$1\geq p(z|x_1,x_2)\geq 0,$
are discarded.

To identify inequivalent facets, we quotient the resulting inequalities by the natural relabeling symmetries of the scenario: (i) permutations of Alice's input labels $x_1$; (ii) permutations of Bob's input labels $x_2$; (iii) exchange of Alice and Bob; and (iv) exchange of Charlie's output labels, $z:0\leftrightarrow1$. Under these symmetry operations, the nontrivial facets reduce to $11$ inequivalent classes, listed in Table~\ref{TABLE1}.

\begin{widetext}
\begin{center}
\begin{table}[h!]
\begin{tabular}{|c|c|c|c|}
\hline
\hline
S.No. &  Inequalities  & ~~$\mathcal{S_Q}^{\text{LB}}$ ~~& $v^*$  \\
\hline
\hline
1   & $\mathcal{I}_1 = p_{00} -  p_{01}  -p_{02} -  p_{10}  +p_{11} -p_{12} -p_{20} -p_{21} +  p_{22} \leqslant 1$   & 1.5& 0.101 \\ 
\hline

2  &  $\mathcal{I}_2 = -p_{00}   -p_{02}  +p_{10} -  p_{11} -p_{12}   -p_{21} +  p_{22} \leqslant 1$  & 1&  0 \\ 
\hline

3   &  $\mathcal{I}_3 = -p_{00} -  p_{01}  -p_{02}  +p_{10} -2  p_{11}  +2p_{12} -  2p_{20}  +p_{21} +  2p_{22} \leqslant 4$  & 4.182 &  0.023 \\ 
\hline

4    &  $\mathcal{I}_4 = p_{00}  +p_{01} +p_{02}+2p_{10} - 2p_{11}  -p_{12}  -p_{20} -2p_{21} + 2 p_{22} \leqslant 5$ &5.182 &  0.021  \\ 
\hline

5  &$ \mathcal{I}_5 =   -p_{01}  +p_{02}  -p_{10} +  p_{11}  +  p_{20}  -p_{22} \leqslant 2$    & 2.5981&   0.123   \\ 
\hline

6  &  $ \mathcal{I}_6 = 2p_{01} -p_{02}   -p_{10} - p_{11} +p_{12}  +2p_{20}- 2p_{21}  -p_{22} \leqslant 3$    & 3.66&   0.086 \\ 
\hline

7  & $\mathcal{I}_7 = -p_{00} +  p_{01}  -p_{02}  +  p_{12}  + p_{20}  -p_{22} \leqslant  2$  & 2.414&  0.095 \\ 
\hline

8  & $\mathcal{I}_8 = p_{00} +p_{01} +p_{10} -p_{11} +p_{12} -p_{20} +p_{22} \leqslant  4$  &4.414 &  0.083    \\ 
\hline

9   & $\mathcal{I}_9 = p_{00} -  p_{01}  + p_{02}   +p_{11} - 2p_{12}  -2p_{20} +  p_{21} +  2p_{22} \leqslant 4$  & 4.66& 0.077 \\ 
\hline

10  & $ \mathcal{I}_{10} = -p_{00}  +  p_{10}  -  p_{12} +  p_{20} -  p_{21}  +p_{22} \leqslant  2$  & 2.414&  0.093  \\  \hline

11  & $ \mathcal{I}_{11} = p_{00}  +p_{01} -p_{02}  - p_{10}  +p_{11} +  p_{12}  +p_{20} - p_{21}  + p_{22} \leqslant 4$  & 4.5&  0.077 \\ 

\hline \hline












\end{tabular}
\caption{The table lists the complete set of inequivalent facet inequalities characterizing the $[(3,3),2,4]$ topology-independent classical correlations. For the sequential topology $A\rightarrow B\rightarrow C$, the reported quantum values, $\mathcal{S_Q}^{\mathrm{LB}}$, are lower bounds obtained through a see-saw optimization over Alice's preparations, Bob's quantum channels, and Charlie's measurements. The robustness of each violation is quantified using a white-noise model in which the communicated qubit is replaced according to $\rho\rightarrow(1-v)\rho+v\mathbb{I}/2$ at both communication stages, $A\rightarrow B$ and $B\rightarrow C$. Here, $v^{*}$ denotes the maximum tolerable white-noise fraction for which the corresponding classical bound remains violated, or, equivalently, $1-v^{*}$ is the minimum visibility required to observe the violation. $\mathcal{I}_1$ is equivalent to the communication task defined in Eq.~\eqref{eq:I-1} of Sec.~\ref{sec:min-exp}. Explicit quantum strategies attaining the reported $\mathcal{S}_{Q}^{\mathrm{LB}}$ for $\mathcal{I}_5$ and $\mathcal{I}_8$ are provided in Appendix~\ref{app:Q-strategies}. Note that, see-saw optimization for the parallel topology $A\rightarrow C\leftarrow B$ yields no quantum violation of any of the listed inequalities.}
\label{TABLE1}
 \end{table}
 \end{center}
\end{widetext}

Owing to Theorem~\ref{lem:1}, it suffices to consider the sequential topology
$A\rightarrow B\rightarrow C$ when searching for a quantum advantage. We optimize the corresponding quantum strategies using semi-definite programming (SDP) in a seesaw procedure. A strategy consists of Alice's preparations $\{\rho_x\}$, Bob's input-dependent quantum channels $\{\mathcal{N}_y\}$, and Charlie's POVM $\{M_z\}$. At each step, two of these three sets of variables are held fixed while the third is optimized with respect to the linear functional defined by a given facet inequality. The three optimizations are iterated until convergence, and the largest values obtained over multiple initializations are reported in Table~\ref{TABLE1}.

To formulate the channel optimization as an SDP, we employ the Choi--Jamio{\l}kowski representation~\cite{miklin2020semi}. For Bob's channel $\mathcal{N}_y$, let
\begin{equation}
\Phi_y
=
(\mathbb{I}\otimes\mathcal{N}_y)
\left(
|\Omega\rangle\langle\Omega|
\right),
\end{equation}
where $|\Omega\rangle=\sum_i |ii\rangle $ is the unnormalized maximally entangled state. The action of the channel is then
$\mathcal{N}_y(\rho_x)
=
\operatorname{Tr}_{\mathrm{in}}
\left[
(\rho_x^{T}\otimes\mathbb{I})\Phi_y
\right].$ Consequently, the observed probabilities admit the compact representation
\begin{equation}
\label{eq5}
p(z|x,y)
=
\operatorname{Tr}
\left[
(\rho_x^{T}\otimes M_z)\Phi_y
\right].
\end{equation}
A matrix $\Phi_y$ represents a completely positive trace-preserving quantum channel if and only if
$\Phi_y\succeq0, 
\operatorname{Tr}_{\mathrm{out}}\Phi_y=\mathbb{I}_{\mathrm{in}}.$ Together with $\rho_x\succeq0, 
\operatorname{Tr}(\rho_x)=1,$
and $M_z\succeq0, 
\sum_z M_z=\mathbb{I},$
these conditions render each individual step of the seesaw optimization an SDP. The resulting quantum values, together with the robustness of the corresponding violations against white noise, are reported in Table~\ref{TABLE1}.

\subsection{Semi-device independent certification of non-unitary channels} 

We next show that some facet inequalities can certify the nonunitarity of Bob's channel in a semi-device-independent manner. To this end, we derive an upper bound on each facet functional over all quantum strategies in the topology $A\rightarrow B\rightarrow C$ under the restriction that Bob applies only unitary channels. We denote this bound by $\mathcal{I}_{U}^{A\rightarrow B\rightarrow C}$.

In the $[(3,3),2,4]$ scenario, Charlie has a binary output measurement. Since the figure of merit is linear in the observed probabilities, the optimization over Charlie's measurement can be restricted to extremal binary qubit POVMs, which, in the nontrivial case, are projective. Let $\{\ket{\phi_0},\ket{\phi_1}\}$ denote the corresponding measurement basis. If Bob's channel for input $x_2$ is unitary, the observed probabilities take the form
\begin{equation}
\label{ptm-pm}
p(z|x_1,x_2)
=
\bra{\phi_z}
U_{x_2}\rho_{x_1}U_{x_2}^{\dagger}
\ket{\phi_z}
=
\bra{\phi_{z|x_2}}
\rho_{x_1}
\ket{\phi_{z|x_2}},
\end{equation}
where
$\ket{\phi_{z|x_2}} =
U_{x_2}^{\dagger}\ket{\phi_z}.$
Hence, every correlation obtainable with unitary channels in the sequential topology admits a qubit prepare-and-measure realization: Alice prepares $\rho_{x_1}$, while the measurement choice $x_2$ specifies the projective measurement
$\{\ket{\phi_{0|x_2}},\ket{\phi_{1|x_2}}\}$.

Consequently, the maximum of any linear functional of
$\{p(z|x_1,x_2)\}$ over unitary sequential strategies is upper bounded by its maximum in the corresponding dimension-bounded prepare-and-measure scenario. We bound the latter using the Navascu\'es--V\'ertesi semidefinite-programming hierarchy~\cite{PhysRevLett.115.020501}. The resulting bounds
$\mathcal{I}_{U}^{A\rightarrow B\rightarrow C}$ for all inequivalent facet inequalities are reported in Table~\ref{TABLE1}.

For one facet, this unitary-channel bound is strictly smaller than the quantum value attainable with general channels. Therefore, observing a value above
$\mathcal{I}_{U}^{A\rightarrow B\rightarrow C}$ rules out every unitary realization of Bob's operation. Importantly, this conclusion does not rely on a characterization of the preparations, channels, or measurements, but only on the assumed dimensions of the communicated systems. It therefore provides a semi-device-independent certification of a nonunitary quantum channel. To the best of our knowledge, this constitutes one of the simplest such certifications in terms of the number of preparations and measurement settings.
\begin{obs}
In the $[(3,3),2,4]$ scenario, an observed violation of
\begin{equation}\label{I8u}
\mathcal{I}_8 = p_{00}+p_{01}+p_{10}-p_{11}+p_{12}-p_{20}+p_{22}
\leq 4.0981
\end{equation}
certifies, under the assumed communication-dimension constraint, that the intermediate party implements a nonunitary quantum channel. There exist quantum strategies that violate \eqref{I8u}. 
\end{obs}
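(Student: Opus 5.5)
The observation has two parts, and I will prove them separately. First, every quantum strategy in which Bob applies unitaries satisfies $\mathcal{I}_8\le 4.0981$. Second, some quantum strategy with a nonunitary channel exceeds this value.

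\emph{Restriction to a sequential topology.} Before invoking the unitary reduction, I need the certification to cover every qubit topology, not only $A\rightarrow B\rightarrow C$. By Theorem~\ref{lem:1}, any parallel qubit strategy $A\rightarrow C\leftarrow B$ can be reproduced sequentially with a qubit on each edge, because $n_z=2$. The single-sender topologies $A\rightarrow C$ and $B\rightarrow C$ use $d=4$. By the Frenkel--Wehner theorem these are classical, so they obey the classical bound $\mathcal{I}_8\le 4<4.0981$. It therefore suffices to bound the sequential topologies. Here "unitary" refers to the intermediate party's operation; by the symmetry of $\mathcal{I}_8$ under relabelings it is enough to treat $A\rightarrow B\rightarrow C$.

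\emph{Upper bound for unitary channels.} As in Eq.~\eqref{ptm-pm}, Bob's unitary can be absorbed into Charlie's binary projective measurement. This turns the problem into a qubit prepare-and-measure scenario with three preparations $\rho_{x_1}$ and three binary measurements $\{\ket{\phi_{z|x_2}}\}$. The preparations can be taken pure, and each measurement either trivial or rank-one projective.

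I then parametrize everything on the Bloch sphere. Write $p_{x_1x_2}=\tfrac12(1+\vec r_{x_1}\cdot\vec m_{x_2})$ with unit vectors. This gives
\[
\mathcal{I}_8=\tfrac{7}{2}\cdot\tfrac{1}{1}\cdot 0+\tfrac12\Big(3+\sum_{x_1,x_2}c_{x_1x_2}\,\vec r_{x_1}\cdot\vec m_{x_2}\Big),
\]
where $c$ is the coefficient matrix of $\mathcal{I}_8$, with constant terms collected from the entries with coefficient $0$.

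For fixed $\vec m$ the optimum over each $\vec r_{x_1}$ is $\|\sum_{x_2}c_{x_1x_2}\vec m_{x_2}\|$. The remaining maximization involves only three unit vectors, and since three vectors can always be rotated into a common $\mathbb{R}^3$ configuration, it is a finite-dimensional problem in a few angles. Trivial measurements, where $\vec m_{x_2}$ is replaced by $\pm$ a constant outcome, must be handled as separate cases.

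I would solve this problem analytically, expecting an optimum at a symmetric configuration giving $4.0981=\tfrac{5}{2}+\tfrac{\sqrt3}{2}\cdot\ldots$, possibly $(3+3\sqrt3)/2\cdot\ldots$; the exact closed form is to be matched. I would certify the value with the Navascu\'es--V\'ertesi hierarchy at a level whose bound coincides with the see-saw value $4.0981$.

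\emph{Quantum violation.} For the violation, it suffices to exhibit an explicit sequential strategy with nonunitary channels achieving $4.414>4.0981$. This can be the see-saw optimum reported in Table~\ref{TABLE1}, with the strategy given in Appendix~\ref{app:Q-strategies}, verified by direct computation of the nine probabilities.

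\emph{Main obstacle.} The hard step is the tight unitary bound. The Bloch-vector reduction is straightforward, but proving that the global maximum over the three measurement directions is exactly $4.0981$ requires either an analytic critical-point analysis or a converging SDP certificate, and I expect the SDP route to be the one actually used.
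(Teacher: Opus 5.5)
\textbf{There is a genuine gap: your reduction of the unitary case is too coarse, and on the set you describe the claimed bound is false.} You let each of the three effective measurements be independently trivial or rank-one projective, and you plan to treat the trivial ones as separate cases.

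Take the $x_2=2$ measurement trivial with outcome $0$, and take $\vec m_0\perp\vec m_1$. Then $p_{12}+p_{22}=2$. Optimizing the $\vec r_{x_1}$, the remaining terms give $\tfrac12\bigl(2+\|\vec m_0+\vec m_1\|+\|\vec m_0-\vec m_1\|\bigr)=1+\sqrt2$. So $\mathcal{I}_8=3+\sqrt2\approx4.414$.

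This is exactly the violating strategy of Appendix~\ref{app:Q-strategies}. There $\mathcal{E}_2$ is the replacement channel onto $\ket{\phi}$, so $p(0|x_1,2)=1$ for all $x_1$. Meanwhile $\mathcal{E}_0$ and $\mathcal{E}_1$ induce rank-one projective measurements with orthogonal Bloch vectors. Carried out as you describe, or with an NV hierarchy over unrestricted binary POVMs, your plan returns at least $4.414$, not $4.0981$. The certification then collapses, because the very strategy you want to certify as nonunitary lies inside your ``unitary'' set.

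\textbf{The missing idea.} With unitary $U_{x_2}$, the effective POVMs are $U_{x_2}^\dagger M_z U_{x_2}$ for a single effect $M_z$ of Charlie, so they all share its spectrum. Optimize over $M_0$ first. Either $M_0\in\{0,\mathbb{I}\}$, in which case all nine probabilities coincide and $\mathcal{I}_8\le 3$. Or $M_0$ is rank-one, in which case all three effective measurements are rank-one projective. Mixed cases cannot occur. This is how the paper restricts to projective measurements before bounding the prepare-and-measure problem with the NV hierarchy.

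\textbf{This also removes your main obstacle.} With the restriction in place, your Bloch reduction finishes the bound without any SDP. Set $u=\vec m_0$, $v=-\vec m_1$, $w=\vec m_2$; you must maximize $F=\|u-v\|+\|w-u\|+\|u+v+w\|$. Cauchy--Schwarz gives $F^2\le 3\bigl(\|u-v\|^2+\|w-u\|^2+\|u+v+w\|^2\bigr)=3(7+2\,v\cdot w)\le 27$. Hence $\mathcal{I}_8=\tfrac12(3+F)\le\tfrac{3+3\sqrt3}{2}\approx 4.0981$, attained at $\vec m_2=-\vec m_1$ with $\vec m_0\cdot\vec m_1=\tfrac12$.

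\textbf{Your opening step on the parallel topology also fails as stated.} The simulation of Theorem~\ref{lem:1} has Bob measure and re-prepare, which is a nonunitary channel, so the unitary bound says nothing about parallel strategies. The observation, as in the paper, presupposes a sequential network with an intermediate party, so you can simply drop that step. Excluding the parallel topology would need a separate bound in the spirit of Theorem~\ref{lem:parallelQ}.
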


\subsection{Semi-device-independent certification of quantum network topology}
\label{sec:topology-certification}

We now show that topology-independent communication inequalities have a
further application: they can be used to certify the topology of a quantum
communication network without assuming the topology \emph{a priori}.
Specifically, we assume only the input and output alphabets and an upper
bound on the total communication dimension, while the communicated states,
quantum operations, and measurements are otherwise uncharacterized. The
result therefore provides a semi-device-independent certification of the
underlying quantum network topology.

Consider the scenario $[(3,3),2,4]$ and inequality~5 of
Table~\ref{TABLE1},
\begin{equation}
    \mathcal{I}_5 =
    -p_{01}+p_{02}-p_{10}+p_{11}+p_{20}-p_{22}
    \leq 2 ,
    \label{eq:I5-topology}
\end{equation}
where $p_{xy}\equiv p(0|x,y)$. As shown above, sequential quantum
communication can violate this inequality. We next show that such a
violation is impossible for the parallel topology, as well as for either
direct topology, under the same total communication constraint.

\begin{theorem}\label{lem:parallelQ}
For the parallel quantum topology $A\rightarrow C \leftarrow B,$ where Alice and Bob independently communicate qubits to Charlie, and for the direct topologies $A\rightarrow C$ and $B\rightarrow C$ with a four-dimensional communicated system, the
quantum values of Eq.~\eqref{eq:I5-topology} satisfy
\begin{equation}
\mathcal{I}_{5,A\rightarrow C \leftarrow B}^{Q}\leq 2, \quad \mathcal{I}_{5,A\rightarrow C}^{Q}\leq2,
    \quad   \mathcal{I}_{5,B\rightarrow C}^{Q}\leq2.
    \label{eq:I5-par-bound}
\end{equation}
\end{theorem}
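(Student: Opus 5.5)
The direct topologies are the easy part and I would dispose of them first. In $A\rightarrow C$ Charlie's output depends only on $x_1$, so $p_{xy}=a_x$ for numbers $a_x\in[0,1]$. With $p_{xy}\equiv p(0|x,y)$ and $x=x_1$, $y=x_2$, the functional becomes
\begin{equation}
\mathcal{I}_5=(p_{02}-p_{01})+(p_{11}-p_{10})+(p_{20}-p_{22}).
\end{equation}
Each bracket compares two entries of the same row, so each vanishes and $\mathcal{I}_5=0\le 2$.

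For $B\rightarrow C$ we have $p_{xy}=b_y$ with $b_y\in[0,1]$. Then $\mathcal{I}_5=-b_1+b_2-b_0+b_1+b_0-b_2=0$. The dimension of the messages plays no role here; only the no-signalling structure matters. This also covers the quantum case, since any four-dimensional quantum message from a single sender still produces row- or column-constant statistics.

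The parallel topology is the real content. Alice sends $\rho_x$ and Bob sends $\sigma_y$, both qubit states, and Charlie measures a POVM $\{M_0,M_1\}$ on the joint system. This gives
\begin{equation}
p_{xy}=\Tr[(\rho_x\otimes\sigma_y)M_0].
\end{equation}
Shared randomness only convexifies, so by linearity it suffices to bound a single strategy. I would take Charlie's effect $M_0$ extremal, i.e.\ a projector on $\mathbb{C}^2\otimes\mathbb{C}^2$, and the states pure. The functional can be regrouped as
\begin{equation}
\mathcal{I}_5=\Tr\!\left[M_0\left(\rho_0\otimes(\sigma_2-\sigma_1)+\rho_1\otimes(\sigma_1-\sigma_0)+\rho_2\otimes(\sigma_0-\sigma_2)\right)\right].
\end{equation}
It is therefore bounded by the sum of the positive eigenvalues of $W=\sum_x\rho_x\otimes\Delta_x$, where the $\Delta_x$ are differences of Bob's states that sum to zero.

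My plan is to show $\lambda_+(W)\le 2$, where $\lambda_+(W)$ denotes the sum of the positive eigenvalues. First, each $\Delta_x$ is traceless and Hermitian with $\|\Delta_x\|_1\le 2$. Second, $\sum_x\Delta_x=0$ lets me rewrite $W=(\rho_0-\rho_2)\otimes(\sigma_2-\sigma_1)+(\rho_1-\rho_2)\otimes(\sigma_1-\sigma_0)$. Third, in Bloch form, $\rho_x-\rho_{x'}=\tfrac12(\vec r_x-\vec r_{x'})\cdot\vec\sigma$ and similarly for Bob, so
\begin{equation}
W=\tfrac14\sum_{ij}T_{ij}\,\sigma_i\otimes\sigma_j,
\end{equation}
a purely correlation-type operator whose $3\times3$ matrix $T$ has rank at most 2. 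For such operators the eigenvalues are $\tfrac14(\pm s_1\pm s_2\pm s_3)$ with an even or odd sign pattern, where the $s_i$ are the singular values of $T$. Since $W$ is traceless, the positive part equals $\tfrac12\|W\|_1$, which is at most $\tfrac12(s_1+s_2)$ when $s_3=0$.

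The remaining problem is to maximize $s_1+s_2$, the trace norm of $T=(\vec a_0-\vec a_2)(\vec b_2-\vec b_1)^T+(\vec a_1-\vec a_2)(\vec b_1-\vec b_0)^T$, over unit vectors $\vec a_x,\vec b_y$. I expect this to be the main obstacle, because the naive triangle bound gives $4+4$, which is far too weak. I would first check the natural candidates numerically (collinear vectors and $120^\circ$ trine configurations) and look for the maximum value $4$.

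If the direct bound fails, my fallback is the following observation. For fixed product states, the optimal $M_0$ is the projector onto the positive part of $W$. For product inputs, every correlation $\Tr[(\rho_x\otimes\sigma_y)M_0]$ with qubit messages is still a ``two independent preparations'' statistic. I would then try to reduce to the classical parallel polytope by diagonalizing in the $\Delta$ directions, which looks promising when the Bob vectors are coplanar. Failing both, I would certify the bound of 2 with a Navascu\'es--V\'ertesi-style SDP relaxation over product qubit preparations, matching how the paper bounds the unitary case.
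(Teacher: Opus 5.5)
Your treatment of the direct topologies $A\rightarrow C$ and $B\rightarrow C$ is correct and matches the paper, which also uses the vanishing row and column sums. The parallel case has a genuine gap. It occurs where you relax $\frac12\|W\|_1\le\frac12(s_1+s_2)$ and then set out to prove $s_1+s_2=\|T\|_1\le 4$, because that target is false. Write $\|T\|_1=\max_{O\in O(3)}\Tr(O^{\mathsf T}T)$ and absorb the rotation $O$ into Bob's Bloch vectors. Then the maximum of $\|T\|_1$ equals the maximum of $\sum_{x,y}(C_5)_{xy}\,\mathbf a_x\cdot\mathbf b_y$. After optimizing Alice's vectors this is $\max\left(|\mathbf b_2-\mathbf b_1|+|\mathbf b_1-\mathbf b_0|+|\mathbf b_0-\mathbf b_2|\right)=3\sqrt3$, the largest perimeter of a triangle inscribed in the unit sphere.

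The trine configuration you planned to check attains this value. Put Bob's vectors at $120^\circ$ on a great circle and take $\mathbf a_x$ to be the normalized edge vectors. Then $T=\frac{3\sqrt3}{2}P$, where $P$ is the projector onto that plane, so $s_1=s_2=\frac{3\sqrt3}{2}$. Your relaxed bound then only gives $\mathcal{I}_5\le\frac{3\sqrt3}{2}\approx 2.598$, which is the sequential quantum value, so it can never separate the topologies. Your fallbacks do not close the gap either. An SDP that forgets the product structure of $\rho_x\otimes\sigma_y$ returns the trivial value $3$. A Navascu\'es--V\'ertesi relaxation that enforces product preparations is not a standard tool you can simply invoke.

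The missing step is to use the spectrum you already wrote down exactly rather than bounding it. When $s_3=0$ the eigenvalues of $W$ are $\frac14(\pm s_1\pm s_2)$. The positive part is therefore $\frac14[(s_1+s_2)+(s_1-s_2)]=\frac{s_1}{2}$, so $\|W\|_1=s_1$, not $s_1+s_2$.

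It then suffices to bound the operator norm of $T$. For unit $\mathbf u,\mathbf v$,
$\mathbf u^{\mathsf T}T\mathbf v=\sum_{x,y}(C_5)_{xy}(\mathbf a_x\cdot\mathbf u)(\mathbf b_y\cdot\mathbf v)\le\max_{\alpha,\beta\in[-1,1]^3}\alpha^{\mathsf T}C_5\beta$.
Maximizing over $\alpha$ gives $|\beta_2-\beta_1|+|\beta_1-\beta_0|+|\beta_0-\beta_2|=2(\max_y\beta_y-\min_y\beta_y)\le 4$. Hence $\mathcal{I}_5\le s_1/2\le 2$, which is exactly the paper's argument. At the trine, for instance, $s_1=\frac{3\sqrt3}{2}<4$, so that configuration causes no trouble once the correct spectral quantity is used.
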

The proof is provided in Appendix~\ref{App:proofs}. The theorem provides a semi-device-independent witness of the communication
topology. Under the assumption of total communication dimension $d=4$, an observed value $\mathcal{I}_5>2$ rules out the parallel topology $A\rightarrow C\leftarrow B$ as well as
the direct topologies $A\rightarrow C$ and $B\rightarrow C$. Within the
set of admissible topologies considered here, the observed correlations
must therefore originate from a sequential quantum network, $A\rightarrow B\rightarrow C$ or
$B\rightarrow A\rightarrow C.$ Importantly, this conclusion requires neither a characterization of the
communicated states nor knowledge of the intermediate operations or
Charlie's measurement. Thus, the same communication inequality that
witnesses topology-independent quantum advantage simultaneously certifies
a genuinely sequential structure of the quantum communication network.

We finally note that this phenomenon does not appear to be restricted to
inequality~\eqref{eq:I5-topology}. Numerical seesaw optimization over
parallel qubit strategies returns, for each of the eleven inequivalent
inequalities in Table~\ref{TABLE1}, a maximal value equal, within
numerical precision, to its topology-independent classical bound. This
provides numerical evidence that other inequalities identified here may
also serve as semi-device-independent witnesses of sequential quantum
network topology. Establishing rigorous parallel quantum bounds for these
inequalities is an interesting direction for future work.

\section{Four-party scenario}

We now turn to the four-party scenario $[(3,3,3),2,6]$, consisting of three senders $\{S_i\}_{i=1}^{3}$ and a receiver $R$. Each sender receives one of three possible inputs, Charlie has a binary output, and the total communication dimension is bounded by $d=6$. The increased number of parties gives rise to a substantially richer set of communication topologies than in the three-party case.

\begin{theorem}
\label{lem:four-party-reduction}
In the topology-independent scenario $[(3,3,3),2,6]$, it is sufficient to consider only two classes of weighted sink-rooted directed acyclic graphs shown in Fig.~\ref{Fig2}. Including all relevant party permutations, these generate only $18$ distinct communication topologies.
\end{theorem}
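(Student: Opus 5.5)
The plan is to start from the complete list of admissible structures in the four-party version of the Observation characterizing all topologies, and to prune it with a small set of simulation rules. A structure $G$ counts as \emph{redundant} if every deterministic strategy on $G$ can be reproduced, for every party assignment and causal ordering, by a deterministic strategy on some other admissible structure $G'$. The classical polytope is the convex hull of the deterministic points of all structures, so discarding redundant structures leaves it unchanged. The statement then reduces to two claims: every weighted sink-rooted DAG with $d=6$ is dominated by a member of the two classes in Fig.~\ref{Fig2}, and counting the labelled members of these classes gives $18$.

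\emph{Step 1: reduction rules.} I will prove four rules.
\begin{enumerate}
\item[(i)] \textbf{Monotonicity.} Deleting an edge, or lowering its weight, never produces new correlations. A structure is therefore dominated by any structure obtained from it by adding edges or raising weights within the budget $d=6$.
\item[(ii)] \textbf{Compressing the edge into $R$.} If $R$ has a single incoming edge, its weight can be lowered to $n_z=2$. The sender on that edge computes $R$'s deterministic output itself and forwards one bit.
\item[(iii)] \textbf{Sequentializing edges into $R$.} If $R$ has several incoming edges, I apply Theorem~\ref{lem:1} repeatedly. One in-neighbour of $R$ forwards the other's message, at the same weight, to a common in-neighbour, which then sends $R$'s output on a weight-$2$ edge. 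This generalization is the same deterministic argument as in Theorem~\ref{lem:1}, applied to subtrees rather than single parties.
\item[(iv)] \textbf{Capping weights.} An edge leaving a sender with no incoming messages never needs weight above $3$, since that sender's whole input is ternary. More generally, an edge never needs weight above the number of distinct values its tail can compute.
\end{enumerate}

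\emph{Step 2: enumeration.} With every weight at least $2$ and total weight $6$, a structure has at most three edges. The possible weight profiles are $(6)$, $(4,2)$, $(3,3)$, $(2,2,2)$, and non-saturating subsets of these, each of which is dominated by a saturating one via (i).
\begin{itemize}
\item Single-edge structures $S_i\to R$ with weight $6$ collapse by (iv) to weight $3$. That is a fully informed sender acting alone, which is dominated by any structure in which that sender's full input reaches the party feeding $R$.
\item Two-edge structures either have both edges into $R$, which is handled by (iii), or form a chain $S_i\to S_j\to R$. In a chain, rules (ii) and (iv) reduce the weights to $(3,2)$, and the budget left over cannot be used by a third sender.
\item Three-edge structures with weights $(2,2,2)$ are either trees or DAGs with a reconvergent path. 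Trees with more than one edge into $R$ reduce by (iii). Reconvergent DAGs are also handled by (iii), since a party with two out-edges can merge its messages through a downstream party without increasing the total weight.
\end{itemize}
After these reductions I expect $R$ to have in-degree one in every survivor, leaving the two maximal classes of Fig.~\ref{Fig2}. My tentative identification is the full chain $S_{\sigma(1)}\to S_{\sigma(2)}\to S_{\sigma(3)}\to R$ with weights $(2,2,2)$, which gives $3!=6$ labelled copies, together with a second class contributing the remaining $12$ labellings. Before fixing this class I will check it against the figure.

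\emph{Step 3: counting and the main obstacle.} For each surviving class I will count the labellings of the non-root vertices modulo automorphisms of the graph, and include distinct causal orderings where the partial order is not total. The total should be $18$. I expect the hardest part to be the reconvergent and branching three-edge cases. There, Theorem~\ref{lem:1} does not apply verbatim, because the forwarding party must relay a message while also adding its own information, and the weight budget is tight. For these cases I will first try a direct deterministic simulation, in which the relaying party encodes the pair (received message, own contribution) into the output bit it owes downstream. If that fails, I will fall back on a computational check: enumerate the deterministic vertices of the questionable structure and verify that they lie in the convex hull generated by the $18$ retained topologies.
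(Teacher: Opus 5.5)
Your reduction rules (i)--(iv) are sound. They are essentially the paper's own tools: Theorem~\ref{lem:1} applied with an upstream subtree playing the role of a single sender, plus compressing the last edge to dimension $n_z=2$. The gap is in Steps 2--3. There the list of survivors, and hence the count $18$, is guessed rather than derived, and the guess is wrong.

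If you carry out your own Step 2, three families survive:
\begin{itemize}
\item the two-edge chains $S_i\to S_j\to R$ with weights $(4,2)$ or $(3,2)$, giving $6$ labellings;
\item the full chains, giving $6$ labellings;
\item the converging trees $S_j\to S_i$, $S_k\to S_i$, $S_i\to R$, giving $3$ labellings, one per choice of hub.
\end{itemize}
The two-edge chains cannot be dropped. A two-edge chain between $S_i$ and $S_j$ computes $\delta_{x_i,x_j}$ exactly. Every structure that is not such a chain either ignores $x_i$ or $x_j$, or, for each fixed value of the third input, reduces to a three-party protocol with one-bit links. Such a protocol is capped at success $7/9$ by the bound of Sec.~\ref{sec:min-exp}, and since success probability is linear, the exact points lie outside the hull of everything else. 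No surviving shape has $12$ labellings; that number is just $18-6$ reverse-engineered from the target. Separately, your convention of counting distinct causal orderings would give $6$ rather than $3$ converging topologies. The two leaves do not communicate, so their relative order cannot affect the correlations.

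Your pruning therefore ends at $15$, not $18$. The missing $3$ is exactly the family you discard in Step 2. The paper's count $18=3+6+6+3$ runs over the four families of Fig.~\ref{Fig2}(a)--(d), and family (a) is the single-edge graphs $S_i\to R$ with weight $6$. You are right that these are redundant: $S_j\to S_i\to R$, with $S_i$ ignoring its incoming bit, reproduces them. The paper nevertheless keeps them. Its proof singles out two structures (fully sequential and converging) only among the three-edge graphs, so the ``two classes'' in the statement does not describe the whole retained list, and chasing a two-class decomposition is the wrong target. To establish the statement as written, keep the single-edge family explicitly; a superset of a sufficient family is still sufficient.

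Finally, the obstacle you anticipate for branching and reconvergent graphs does not arise, and the paper handles these cases directly.
\begin{itemize}
\item For $S_a\to S_b\to R$, $S_c\to R$: Theorem~\ref{lem:1} applies verbatim with $\{S_a,S_b\}$ treated as one sender whose message is $m_b$. $S_b$ sends $m_b$ to $S_c$, who outputs $R$'s bit, giving the full chain.
\item For $S_a\to S_b$, $S_a\to R$, $S_b\to R$: $S_a$ sends its whole trit to $S_b$ (weight $3\le 4$), who computes $R$'s output, giving a two-edge chain.
\end{itemize}
No party ever has to compress a received message together with its own contribution into one bit, and no computational fallback is needed. In rule (iii), orient the new edge consistently with the existing order; otherwise, in reconvergent graphs, you can create a cycle.
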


\begin{figure}[h!]
    \centering
     \begin{subfigure}{0.19\textwidth}
         \centering
         \includegraphics[scale=0.26]{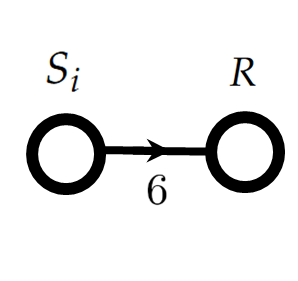}
         \caption{}\label{fig4:1}
     \end{subfigure}
     \hfill
     \begin{subfigure}{0.19\textwidth}
         \centering
    \includegraphics[scale=0.8]{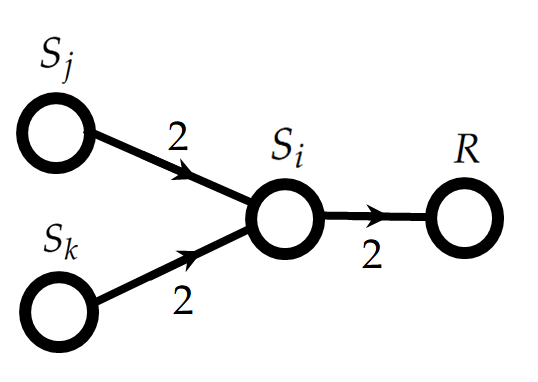}
         \caption{}\label{fig4:2}
     \end{subfigure}
     \hfill 
     \begin{subfigure}{0.3\textwidth}
         \centering
        \includegraphics[scale=0.26]{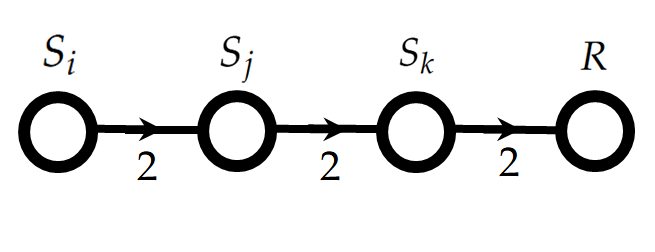}
         \caption{}\label{fig4:3a}
     \end{subfigure}
          \hfill
     \begin{subfigure}{0.3\textwidth}
         \centering
         \includegraphics[scale=0.26]{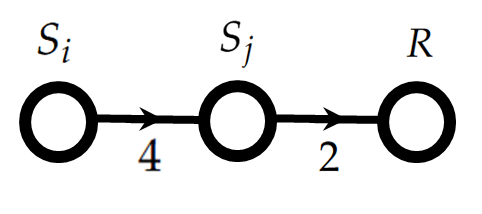}
         \caption{}
         \label{fig4:3b}
     \end{subfigure}
     
     \caption{All communication topologies sufficient to characterize the topology-independent classical correlations in the $[(3,3,3),2,6]$ scenario. The directed edges indicate the allowed communication channels, with the corresponding message dimensions indicated alongside the edges. (a) Here, $i\in\{1,2,3\}$, resulting in 3 topologies. (b) Here, $(i, j)$ represents all possible combinations from the set $\{1, 2, 3\}$, resulting in 6 distinct topologies. (c) Here, $(i,j,k)$ represents all possible permutations from the set $\{1,2,3\}$, resulting in 6 topologies. (d) Here, $i\in \{1,2,3\}$, with $j$ and $k$ being distinct from $i$ and each other, resulting in 3 topologies.}
     \label{Fig2}
 \end{figure}

As a consequence of Theorem~\ref{lem:four-party-reduction}, the most general topology-independent classical correlation in the $[(3,3,3),2,6]$ scenario can be written as a convex mixture of correlations generated by the four classes of DAGs identified above. Explicitly,
\begin{eqnarray}
\label{eq:classical-four-party}
p_{\mathcal C}(z|\vec{x})
&=&
\sum_{i=1}^{3}q_{\lambda_i}
p^{S_i\rightarrow R}(z|\vec{x},\lambda_i)
\nonumber\\
&&+
\sum_{\substack{i,j\in\{1,2,3\}\\i\neq j}}
q_{\mu_{i,j}}
p^{S_i\rightarrow S_j\rightarrow R}
(z|\vec{x},\mu_{i,j})
\nonumber\\
&&+
\sum_{\substack{i,j,k\in\{1,2,3\}\\i\neq j,\;j\neq k,\;k\neq i}}
q_{\nu_{i,j,k}}
p^{S_i\rightarrow S_j\rightarrow S_k\rightarrow R}
(z|\vec{x},\nu_{i,j,k})
\nonumber\\
&&+
\sum_{i=1}^{3}q_{\kappa_i}
p^{\substack{
S_{i\oplus_3 1}\rightarrow S_i,\,
S_{i\oplus_3 2}\rightarrow S_i,\\
S_i\rightarrow R}}
(z|\vec{x},\kappa_i),
\end{eqnarray}
where $\vec{x}=(x_1,x_2,x_3)$, $\oplus_3$ denotes addition modulo $3$ with the sender labels understood cyclically, and all mixing weights are non-negative and satisfy
\begin{equation}
\sum_{i=1}^{3}q_{\lambda_i}
+
\sum_{\substack{i,j\in\{1,2,3\}\\i\neq j}}q_{\mu_{i,j}}
+
\sum_{\substack{i,j,k\in\{1,2,3\}\\i\neq j,\;j\neq k,\;k\neq i}}
q_{\nu_{i,j,k}}
+
\sum_{i=1}^{3}q_{\kappa_i}
=1.
\end{equation}
Each of the four terms in \eqref{eq:classical-four-party} is defined as follows.
For the single-edge topology, one sender may use the entire communication resource and transmit a message of dimension at most $6$. The corresponding correlations are
\begin{equation}
p^{S_i\rightarrow R}(z|\vec{x},\lambda_i)
=
\sum_{m_i=0}^{5}
p(m_i|x_i,\lambda_i)
p(z|m_i,\lambda_i).
\end{equation}
For the two-edge sequential topology, it is sufficient to assign dimensions $4$ and $2$ to the first and second edges, respectively. Thus,
\begin{equation}
\begin{split}
p^{S_i\rightarrow S_j\rightarrow R}
(z|\vec{x},\mu_{i,j})
={}&
\sum_{\substack{m_i\in\{0,1,2,3\}\\m_j\in\{0,1\}}}
p(m_i|x_i,\mu_{i,j})\\
&\times
p(m_j|x_j,m_i,\mu_{i,j})
p(z|m_j,\mu_{i,j}).
\end{split}
\end{equation}
For the fully sequential three-edge topology, each communicated message is binary, and
\begin{equation}
\begin{split}
&p^{S_i\rightarrow S_j\rightarrow S_k\rightarrow R}
(z|\vec{x},\nu_{i,j,k})
\\
&=
\sum_{m_i,m_j,m_k\in\{0,1\}}
p(m_i|x_i,\nu_{i,j,k})
p(m_j|x_j,m_i,\nu_{i,j,k})
\\
&\hspace{20mm}\times
p(m_k|x_k,m_j,\nu_{i,j,k})
p(z|m_k,\nu_{i,j,k}).
\end{split}
\end{equation}
Finally, for the converging topology, let $S_i$ denote the sender adjacent to $R$ and let $S_j$ and $S_k$ be the remaining two senders. Its correlations take the form
\begin{equation}
\begin{split}
&p^{\substack{
S_j\rightarrow S_i,\,
S_k\rightarrow S_i,\\
S_i\rightarrow R}}
(z|\vec{x},\kappa_i)
\\
&=
\sum_{m_i,m_j,m_k\in\{0,1\}}
p(m_j|x_j,\kappa_i)
p(m_k|x_k,\kappa_i)
\\
&\hspace{15mm}\times
p(m_i|x_i,m_j,m_k,\kappa_i)
p(z|m_i,\kappa_i).
\end{split}
\end{equation}
Here $i,j,k$ are mutually distinct. Arbitrary local stochasticity can be absorbed into the shared random variable, so the extremal points of each correlation set are obtained by enumerating deterministic response functions. The resulting numbers of distinct vertices for the four classes of DAGs are summarized in Table~\ref{tab:four-party-vertices}.

\begin{table}[h!]
\centering
\begin{tabular}{l|c|c}
\hline\hline
DAG class & No.\ of & No.\ of \\
 & topologies & vertices \\
\hline
$S_i\rightarrow R$
    & $3$ & $13824$\\[1mm]
$S_i\rightarrow S_j\rightarrow R$
    & $6$ & $1048576$ \\[1mm]
$S_i\rightarrow S_j\rightarrow S_k\rightarrow R$
    & $6$ & $131072$ \\[1mm]
$S_j\rightarrow S_i,\;S_k\rightarrow S_i,\;S_i\rightarrow R$
    & $3$ & $4096$ \\
\hline\hline
\end{tabular}
\caption{Number of distinct vertices generated by each class of DAGs required to characterize topology-independent classical correlations in the $[(3,3,3),2,6]$ scenario.}
\label{tab:four-party-vertices}
\end{table}

Since $z$ is binary, normalization allows us to represent the correlations by the $27$ probabilities
$p_{x_1x_2x_3}:=p(0|x_1,x_2,x_3),$. The topology-independent classical set is the convex hull of the vertices generated by all $18$ topologies above. In contrast to the three-party scenario, a complete facet enumeration of this high-dimensional polytope is computationally demanding. We therefore search for nontrivial supporting inequalities by considering linear functionals
\begin{equation}
\mathcal I
=
\sum_{x_1,x_2,x_3}
\alpha_{x_1x_2x_3}
p_{x_1x_2x_3},
\ \ 
\alpha_{x_1x_2x_3}\in\{-1,0,1\},
\end{equation}
with randomly sampled coefficient tensors $\alpha_{x_1x_2x_3}$. For each functional, its classical bound is obtained by maximizing over all deterministic vertices. Below, we provide one of them that exhibits quantum advantage:
\begin{equation}
\begin{aligned}
\mathcal{I}_{12}&= -p_{000} + p_{002} - p_{010} + p_{011} + p_{100} 
   - p_{101} - p_{102} \nonumber \\
  & + p_{110} + p_{111} 
   - p_{112} + p_{120} + p_{121} - p_{122} + p_{202} \nonumber \\
   & + p_{210} 
   + p_{212} - p_{220} - p_{221} \leqslant 8.
\end{aligned}
\label{eq:four-party-ineq}
\end{equation}
A quantum advantage for $\mathcal{I}_{12} = 7+\sqrt{2} \cong 8.414$ is obtained in the $S_1\rightarrow S_3,\;S_2\rightarrow S_3,\;S_3\rightarrow R$ topology, using the SDP seesaw procedure. The explicit quantum strategy is given in Appendix~\ref{app:Q-strategies}.

\section{Conclusions}

We have introduced topology-independent quantum advantage, where quantum communication outperforms every classical protocol satisfying the same communication constraint, irrespective of how the classical communication resources are distributed among the parties. We demonstrated this phenomenon in a minimal three-party scenario, characterized the corresponding topology-independent classical correlation polytope, and identified several quantum violations. Remarkably, one of the resulting inequalities also enables semi-device-independent certifications of nonunitary quantum channels and quantum network topologies. We further extended the framework to a four-party network and found additional topology-independent quantum advantages.

Our results open several directions for future work. An important question is whether one can construct families of communication tasks exhibiting increasingly large, or even unbounded, separations between quantum and classical strategies while remaining independent of the communication topology. The robustness of the advantage against arbitrary classical network architectures also motivates the search for cryptographic and semi-device-independent applications whose security or performance does not require assumptions about the underlying communication topology.

It would also be interesting to extend the present framework beyond a single input-free output party. Natural generalizations include allowing inputs at the output party, multiple parties producing outputs, or more general scenarios in which several parties both receive inputs and generate outputs. Understanding topology-independent quantum advantages in larger networks, and determining whether the associated communication inequalities can certify more general properties of quantum channels and operations, constitute further promising directions.

\textit{Note added.} AP, VNSMA, and LGP contributed equally. The use of generative artificial intelligence tools was limited to language editing. The numerical codes used for the facet enumeration and for obtaining the quantum advantages presented in this work are publicly available in Ref.~\cite{codes}.

\bibliography{ref} 

\appendix

\section{Proof of Theorems~\ref{lem:1}-~\ref{lem:four-party-reduction} }\label{App:proofs}

\begin{proof}[Proof of Theorem~\ref{lem:1}]
We first consider classical communication. For a fixed value of the
shared random variable $\lambda$, a general strategy in the parallel
topology has the form
\begin{equation}
p(z|x,y,\lambda)
=
\sum_{m_A,m_B}
p(m_A|x,\lambda)
p(m_B|y,\lambda)
p(z|m_A,m_B,\lambda),
\end{equation}
where $m_A\in[d_A]$ and $m_B\in[d_B]$.

Consider now the topology $A\rightarrow B\rightarrow C$. Alice sends
the same message $m_A$ to Bob. Since Bob knows both $m_A$ and his local
input $y$, he can directly implement the effective response function
\begin{equation}
\widetilde p(z|m_A,y,\lambda)
=
\sum_{m_B}
p(m_B|y,\lambda)
p(z|m_A,m_B,\lambda).
\end{equation}
Bob samples $z$ according to this distribution and sends the resulting
value to Charlie. The second communication therefore requires only a
message of dimension $n_z$. Charlie simply outputs the received value.
The resulting correlation is
\begin{equation}
\begin{split}
p(z|x,y,\lambda)
&=
\sum_{m_A}
p(m_A|x,\lambda)
\widetilde p(z|m_A,y,\lambda)\\
&=
\sum_{m_A,m_B}
p(m_A|x,\lambda)
p(m_B|y,\lambda)
p(z|m_A,m_B,\lambda),
\end{split}
\end{equation}
which coincides exactly with the correlation generated by the parallel
topology. Averaging over $\lambda$ preserves the equality.

The quantum case follows from the same observation. Let Alice and Bob
send states $\rho_x$ and $\sigma_y$, of dimensions $d_A$ and $d_B$,
respectively, and let $\{M_z\}_{z\in[n_z]}$ be Charlie's POVM. The
parallel strategy produces
\begin{equation}
p(z|x,y)
=
\operatorname{Tr}
\left[
M_z(\rho_x\otimes\sigma_y)
\right].
\end{equation}
For every fixed $y$, define operators acting on Alice's communicated
system by
\begin{equation}
E_{z|y}
=
\operatorname{Tr}_{B}
\left[
(\mathbb{I}\otimes\sigma_y^{1/2})
M_z
(\mathbb{I}\otimes\sigma_y^{1/2})
\right].
\end{equation}
These operators form a valid POVM, since $E_{z|y}\geq0$,
and $\sum_z E_{z|y}
= \operatorname{Tr}_{B}
\left[
\mathbb{I}\otimes\sigma_y
\right] = \mathbb{I}.$
Moreover,
\begin{equation}
\operatorname{Tr}
\left[
E_{z|y}\rho_x
\right]
=
\operatorname{Tr}
\left[
M_z(\rho_x\otimes\sigma_y)
\right].
\end{equation}

Hence, in the sequential topology $A\rightarrow B\rightarrow C$,
Alice sends $\rho_x$ to Bob, who performs the POVM
$\{E_{z|y}\}_z$ determined by his input $y$. Bob then communicates the
measurement outcome $z$ to Charlie, who outputs it directly. The first
message has dimension $d_A$, while the second has dimension $n_z$, and
the resulting correlation is identical to that of the original
parallel strategy.

The simulation in the topology $B\rightarrow A\rightarrow C$ follows
by exchanging Alice and Bob.
\end{proof}

\begin{proof}[Proof of Theorem~\ref{lem:2}]
As an immediate consequence, under the additive communication-dimension
constraint of \textit{Theorem}~\ref{lem:1}, the topology $A\rightarrow C,\;B\rightarrow C$ is redundant
whenever
\begin{equation}
n_z\leq\max\{d_A,d_B\}.
\end{equation}
Indeed, if $n_z\leq d_B$, the simulation
$A\rightarrow B\rightarrow C$ uses total dimension
$d_A+n_z\leq d_A+d_B$; if $n_z\leq d_A$, the analogous simulation $B\rightarrow A\rightarrow C$ satisfies the same bound. Since in $[(3,3),2,4]$, the only possibility is $d_A=d_B=2$ and $n_z$ is also at least 2, the parallel topology is redundant. 

Consider the two topologies in which Alice (Bob) communicates with both Bob (Alice) and Charlie, while Bob (Alice) also communicates with Charlie. Each such topology contains three communication channels and is therefore incompatible with the communication-dimension constraint $d=4$, since every nontrivial channel has dimension at least 2. 

Consequently, only the four topologies considered above remain, which completes the proof.
\end{proof}

\begin{proof}[Proof of Theorem~\ref{lem:parallelQ}]
For the parallel topology the correlations have the form
\begin{equation}
    p_{xy}
    =
    \Tr\!\left[
        M_0\left(\rho_x\otimes\sigma_y\right)
    \right],
    \label{eq:parallel-correlation}
\end{equation}
where $\rho_x$ and $\sigma_y$ are qubit states independently prepared by
Alice and Bob, respectively, and $0\leq M_0\leq\mathbb{I}$ is Charlie's
binary measurement effect. Since the objective is linear, it is sufficient
to consider pure preparations. We write
\begin{equation}
    \rho_x=\frac{1}{2}
    \left(\mathbb{I}+\mathbf{a}_x\cdot\boldsymbol{\sigma}\right),
    \quad
    \sigma_y=\frac{1}{2}
    \left(\mathbb{I}+\mathbf{b}_y\cdot\boldsymbol{\sigma}\right),
\end{equation}
with $|\mathbf{a}_x|=|\mathbf{b}_y|=1$. Let
\begin{equation}
    C_5=
    \begin{pmatrix}
        0 & -1 & 1\\
        -1 & 1 & 0\\
        1 & 0 & -1
    \end{pmatrix}
\end{equation}
denote the coefficient matrix of Eq.~\eqref{eq:I5-topology}, and define the quantum operator associated with Eq.~\eqref{eq:I5-topology}
\begin{equation}
    W_5=
    \sum_{x,y=0}^{2}(C_5)_{xy}\,
    \rho_x\otimes\sigma_y .
\end{equation}
Since every row and every column of $C_5$ sums to zero, all identity and
single-party Bloch terms cancel. Hence
\begin{equation}
    W_5=
    \frac{1}{4}
    \sum_{i,j=1}^{3}
    T_{ij}\,
\sigma_i\otimes\sigma_j,
    \quad
    T=A^{\mathsf T}C_5B,
    \label{eq:W5-T}
\end{equation}
where the rows of $A$ and $B$ are the Bloch vectors
$\mathbf{a}_x^{\mathsf T}$ and $\mathbf{b}_y^{\mathsf T}$, respectively.

Since $\rank(C_5)=2$, Eq.~\eqref{eq:W5-T} implies
$\rank(T)\leq2$. Let $s_1\geq s_2\geq0$ denote the two
nonzero singular values of $T$. A signed singular-value decomposition
can be written as
\begin{equation}
    T=O_A
    \operatorname{diag}(t_1,t_2,0)
    O_B^{\mathsf T},
    \quad
    |t_1|=s_1,\ \ |t_2|=s_2,
\end{equation}
where $O_A,O_B\in SO(3)$. For every
$O\in SO(3)$ there exists a qubit unitary $U$ satisfying
\begin{equation}
    U\sigma_iU^\dagger
    =
    \sum_{j=1}^3 O_{ji}\sigma_j.
\end{equation}
Hence, choosing local unitaries $U_A$ and $U_B$ associated with
$O_A$ and $O_B$, respectively, transforms the correlation matrix as
$ O_ATO_B^{\mathsf T}=\operatorname{diag}(t_1,t_2,0)$.
Consequently,
\bea
    W'_5
    &:=&
    (U_A\otimes U_B)W_5
    (U_A^\dagger\otimes U_B^\dagger) \nonumber \\
   & =&
    \frac{1}{4}
    \left(
        t_1\sigma_1\otimes\sigma_1
        +
        t_2\sigma_2\otimes\sigma_2
    \right).
    \label{eq:W5-canonical}
\eea 
Thus $W_5$ and $W'_5$ have the same spectrum. Since
$\sigma_1\otimes\sigma_1$ and $\sigma_2\otimes\sigma_2$ commute,
the eigenvalues of $W_5$ are
\begin{equation}
    \frac{1}{4}
    \left\{
        t_1+t_2,\,
        t_1-t_2,\,
        -t_1+t_2,\,
        -t_1-t_2
    \right\}.
\end{equation}
Using $|t_1|=s_1\geq |t_2|=s_2$, the sum of the positive
eigenvalues is therefore
$\frac{1}{4}\left[(s_1+s_2)+(s_1-s_2)\right]=\frac{s_1}{2}.$

Charlie's binary measurement effect is only required to satisfy
$0\leq M_0\leq\mathbb{I}$; in particular, no restriction on
$\Tr(M_0)$ is imposed. For any Hermitian operator $W$ and any effect
$0\leq M\leq\mathbb{I}$,
\begin{equation}
    \Tr(MW)\leq\Tr(W_+),
\end{equation}
where $W_+$ denotes the positive part of $W$. The maximum is
attained by choosing $M$ to be the projector onto the positive
eigenspace of $W$. Consequently,
\begin{equation}
    \mathcal{I}_5
    =
    \Tr(M_0W_5)
    \leq
    \Tr[(W_5)_+]
    =
    \frac{s_1}{2}.
    \label{eq:I5-positive}
\end{equation}
For arbitrary real unit vectors $\mathbf{u}$ and $\mathbf{v}$,
\begin{equation}
    \mathbf{u}^{\mathsf T}T\mathbf{v}
    =
    \boldsymbol{\alpha}^{\mathsf T}
    C_5
    \boldsymbol{\beta},
\end{equation}
where
$\alpha_x=\mathbf{a}_x\cdot\mathbf{u},\beta_y=\mathbf{b}_y\cdot\mathbf{v},$ and consequently $|\alpha_x|,|\beta_y|\leq1$. It follows that
\begin{equation}
    s_1=\|T\|_{\infty}
    \leq
    \max_{\boldsymbol{\alpha},\boldsymbol{\beta}\in[-1,1]^3}
    \boldsymbol{\alpha}^{\mathsf T}
    C_5
    \boldsymbol{\beta}
    =4,
\end{equation}
where the last maximum is attained at extremal points
$\alpha_x,\beta_y\in\{-1,1\}$. Substitution into
Eq.~\eqref{eq:I5-positive} gives
$I_{5,A\rightarrow C \leftarrow B}^{Q}\leq2.$

For the topology $A\rightarrow C$, the correlations are independent of
Bob's input $y$, i.e.,
$p_{xy}=p_x .$
Consequently,
\begin{equation}
    I_5
    =
    \sum_x
    \left(\sum_y(C_5)_{xy}\right)p_x
    =0,
\end{equation}
because every row of $C_5$ sums to zero. This conclusion is independent
of the dimension of the communicated quantum system and therefore also
holds when Alice communicates a four-dimensional system. Similarly, for
$B\rightarrow C$, $p_{xy}=p_y$, and the vanishing column sums of $C_5$
give $\mathcal{I}_5=0$. This proves Eq.~\eqref{eq:I5-topology}.
\end{proof}

\begin{proof}[Proof of Theorem~\ref{lem:four-party-reduction}] Since every nontrivial communication channel has dimension at least $2$ and the total communication dimension is bounded by $d=6$, an admissible topology contains at most three edges. We classify the possibilities according to the number of edges and repeatedly use Theorem~\ref{lem:1} to eliminate redundant parallel communication to the input-free receiver $R$. 

For a single edge, the only possible structure is $ S_i\rightarrow R$. There are three choices of the communicating sender, giving three topologies shown in Fig.~\ref{fig4:1}. 

For two edges, there are two graph structures. The parallel topology $S_i\rightarrow R,\ S_j\rightarrow R$ is redundant by Theorem~\ref{lem:1}: its correlations can be reproduced by a sequential topology in which one sender communicates with the other, who performs the effective decoding and communicates the binary outcome to $R$. It therefore suffices to retain $S_i\rightarrow S_j\rightarrow R$. There are $3\times2=6$ ordered choices of $(S_i,S_j)$, yielding six topologies given by Fig.~\ref{fig4:2}. Moreover, since $R$ has a binary output, the final communication can always be chosen to have dimension $2$; the remaining communication resource can therefore be assigned to the first edge. 

For three edges, all messages necessarily have dimension $2$. Any topology containing two independent edges terminating at $R$ can again be reduced using Theorem~\ref{lem:1}. For example, a topology containing $S_i\rightarrow S_j,\ S_j\rightarrow R,\ S_k\rightarrow R$ can be replaced by $S_i\rightarrow S_j\rightarrow S_k\rightarrow R,$ where $S_k$ performs the effective decoding associated with the two messages that would otherwise arrive independently at $R$. Thus such graphs reduce to the fully sequential topology (Fig.~\ref{fig4:3a}). Similarly, the fully parallel topology $S_i\rightarrow R,\ S_j\rightarrow R,\ S_k\rightarrow R$ can be reduced to a topology in which two senders communicate with the third sender, who performs the effective decoding and sends the binary result to $R$. This gives the converging topology $S_j\rightarrow S_i,\ S_k\rightarrow S_i,\ S_i\rightarrow R$ (Fig.~\ref{fig4:3b}). The remaining three-edge DAGs are either equivalent, under relabeling of the senders, to one of these two structures or reduce to one of the two-edge sequential topologies by the same argument. Consequently, only the fully sequential and converging structures need to be retained among three-edge graphs. The fully sequential topology admits all $3!=6$ permutations of the senders. For the converging topology, only the choice of the terminal sender $S_k$ is relevant, since exchanging the two incoming senders leaves the graph unchanged; hence there are three such topologies. Altogether, it is therefore sufficient to consider $3+6+6+3=18$ topologies. 
\end{proof}

\section{Quantum advantageous strategies}\label{app:Q-strategies}

\subsection{$\mathcal{I}_5$ from Table~\ref{TABLE1}, in $[(3,3),2,4]$ scenario}

The encoding strategy by Alice is given as $\rho_{x_1} = \ket{\psi_{x_1}}\bra{\psi_{x_1}}$:
\bea
&\ket{\psi_{0}} = \ket{0},  \ket{\psi_{1}} = \frac{1}{2} \Bigl(\ket{0} + \sqrt{3} \ket{1} \Bigr), \nonumber \\
&\ket{\psi_2} = \frac{1}{2} \Bigl(\ket{0} - \sqrt{3} \ket{1} \Bigr) 
\eea
Bob's channel is specified by two Kraus operators,
$\mathcal{E}_{x_2}(\rho)=K_{x_2}\rho K_{x_2}^\dagger+K_{x_2}'\rho K_{x_2}'^\dagger$.
For $x_2=0$,
\begin{equation}
K_0 = \begin{pmatrix}
    \frac{(1+i)}{2\sqrt{2}} & \frac{(-1+i)}{2\sqrt{2}} \\
    - \frac{i}{2} & -\frac{1}{2}
\end{pmatrix}, \quad 
K_0' = \begin{pmatrix}
    \frac{(1+i)}{2\sqrt{2}} & \frac{(1-i)}{2\sqrt{2}} \\
     \frac{i}{2} & -\frac{1}{2}
\end{pmatrix}.
\end{equation}
For $x_2=1$,
\begin{equation}
K_1 = \begin{pmatrix}
    - \frac{\sqrt{3}}{2\sqrt{2}} & \frac{1}{2\sqrt{2}} \\
     \frac{1-i}{2} &  0
\end{pmatrix}, 
K_1' = \begin{pmatrix}
   \frac{(1+i)}{4} & \frac{\sqrt{3}(1+i)}{4} \\
     0 & -\frac{1}{\sqrt{2}}
\end{pmatrix}
\end{equation}
For $x_2=2$,
\begin{equation}
K_2 = \begin{pmatrix}
    \frac{\sqrt{3}}{2\sqrt{2}} & \frac{1}{2\sqrt{2}} \\
     \frac{(1-i)}{2} &  0
\end{pmatrix},
K_2' = \begin{pmatrix}
    \frac{(1+i)}{4} & \frac{-\sqrt{3}(1+i)}{4} \\
     0 &  \frac{1}{\sqrt{2}}
\end{pmatrix}.
\end{equation}
Charlie's measurement is given as $\{\ket{\phi}\!\bra{\phi},\mathbb{I} - \ket{\phi}\!\bra{\phi} \}$. Where $\ket{\phi}$ is given by
\bea
\ket{\phi} = \frac{1}{2}(\ket{0} + e^{-\frac{i \pi}{4}}\ket{1} ).
\eea
Using these states, transformations and measurements, we get the quantum value for $\mathcal{I}_5$ to be $\frac{3\sqrt{3}}{2} =  2.5981$.

\subsection{$\mathcal{I}_8$ from Table~\ref{TABLE1}, in $[(3,3),2,4]$ scenario}

Let us first define $\theta = \frac{\pi}{8}$. The encoding strategy by Alice is given by
\bea
& \ket{\psi_0} = \ket{0},\ 
\ket{\psi_1} = \frac{1}{\sqrt{2}}
\left(\ket{0}+\ket{1}\right), \nonumber \\
& \ket{\psi_2} = \sin(\theta) \ket{0} - \cos(\theta) \ket{1}  .
\eea
Similar as before each channel is specified by two Kraus operators,
$\mathcal{E}_{x_2}(\rho)=K_{x_2}\rho K_{x_2}^\dagger+K_{x_2}'\rho K_{x_2}'^\dagger$.
For $x_2=0$,
\begin{equation}
K_0=
\frac{1}{\sqrt{2}}
\begin{pmatrix}
\sin\theta & \cos\theta\\
-\cos\theta & \sin\theta
\end{pmatrix},
\ 
K'_0=
\frac{1}{\sqrt{2}}
\begin{pmatrix}
-\cos\theta & \sin\theta\\
\sin\theta & \cos\theta
\end{pmatrix};
\end{equation}
for $x_2=1$,
\begin{equation}
K_1 = \frac{1}{\sqrt{2}}\begin{pmatrix}
    -\sin\theta & \cos\theta \\
    \cos\theta & \sin\theta
\end{pmatrix}, \ 
K_1' = \frac{1}{\sqrt{2}}\begin{pmatrix}
    \cos\theta & \sin\theta \\
    -\sin\theta & \cos\theta
\end{pmatrix};
\end{equation}
and for $x_2=2$,
\begin{equation}
K_2 = \frac{1}{\sqrt{2}}\begin{pmatrix}
    0 & -1 \\
    0 & 1
\end{pmatrix}, \
K_2' = \frac{1}{\sqrt{2}}\begin{pmatrix}
    -1 & 0 \\
     1 & 0
\end{pmatrix}.
\end{equation} Charlie's measurement is given as $\{\ket{\phi}\!\bra{\phi},\mathbb{I} - \ket{\phi}\!\bra{\phi} \}$, where $\ket{\phi}=\frac{1}{\sqrt{2}}(\ket{0}-\ket{1})$.
This quantum strategy yields the value $3+\sqrt{2} = 4.414$ for $\mathcal{I}_8$.

\subsection{$\mathcal{I}_{12}$ inequality in $[(3,3,3),2,6]$ scenario}

The encoding strategy of $S_1$ is
$\rho^{(1)}_{x_1}=\ket{\psi^{(1)}_{x_1}}\bra{\psi^{(1)}_{x_1}}$,
where
\begin{equation}
\ket{\psi^{(1)}_0}=\ket{0},\ 
\ket{\psi^{(1)}_1}=\ket{1},\ 
\ket{\psi^{(1)}_2}=\ket{0}.
\end{equation}
The encoding strategy of $S_2$ is is
$\rho^{(2)}_{x_2}=\ket{\psi^{(2)}_{x_2}}\bra{\psi^{(2)}_{x_2}}$,
where
\begin{align}
\ket{\psi^{(2)}_0}
&=
\frac{\ket{0}+\ket{1}}{\sqrt{2}}, \nonumber \\
\ket{\psi^{(2)}_1}
&=
\sin\left(\frac{\pi}{8}\right)\ket{0}
-
\cos\left(\frac{\pi}{8}\right)\ket{1}, \nonumber  \\
\ket{\psi^{(2)}_2}
&=
\cos\left(\frac{\pi}{8}\right)\ket{0}
-
\sin\left(\frac{\pi}{8}\right)\ket{1}.
\end{align}

Upon receiving the two qubits from $S_1$ and $S_2$, the party $S_3$
applies the input-dependent channel
\begin{equation}
\mathcal{E}_{x_3}(\rho)
=
\sum_{j=1}^{6}
K_{x_3}^{j}\rho (K_{x_3}^{j})^\dagger .
\end{equation}
The Kraus operators are as follows.

For $x_3=0$,
\begin{align}
K_0^1&=
\begin{pmatrix}
0&0&0&0\\
0&0&-\frac12&0\\
0&0&0&0\\
0&0&\frac12&0
\end{pmatrix},
&
K_0^2&=
\begin{pmatrix}
0&0&0&\frac12\\
0&0&0&0\\
0&0&0&\frac12\\
0&0&0&0
\end{pmatrix},
\\[1mm]
K_0^3&=
\begin{pmatrix}
0&0&\frac12&0\\
0&0&0&0\\
0&0&\frac12&0\\
0&0&0&0
\end{pmatrix},
&
K_0^4&=
\begin{pmatrix}
0&0&0&0\\
0&0&0&-\frac12\\
0&0&0&0\\
0&0&0&\frac12
\end{pmatrix},
\\[1mm]
K_0^5&=
\begin{pmatrix}
0&0&0&0\\
\frac{1}{\sqrt2}&0&0&0\\
0&0&0&0\\
\frac{1}{\sqrt2}&0&0&0
\end{pmatrix},
&
K_0^6&=
\begin{pmatrix}
0&0&0&0\\
0&\frac{1}{\sqrt2}&0&0\\
0&0&0&0\\
0&\frac{1}{\sqrt2}&0&0
\end{pmatrix}.
\end{align}
For $x_3=1$,
\begin{align}
K_1^1&=
\begin{pmatrix}
0&0&0&0\\
0&-\frac12&0&0\\
0&0&0&0\\
0&\frac12&0&0
\end{pmatrix},
&
K_1^2&=
\begin{pmatrix}
0&\frac12&0&0\\
0&0&0&0\\
0&\frac12&0&0\\
0&0&0&0
\end{pmatrix},
\\[1mm]
K_1^3&=
\frac{1}{2\sqrt2}
\begin{pmatrix}
0&0&0&0\\
0&0&1&-1\\
0&0&0&0\\
0&0&-1&1
\end{pmatrix},
&
K_1^4&=
\frac{1}{2\sqrt2}
\begin{pmatrix}
0&0&1&-1\\
0&0&0&0\\
0&0&1&-1\\
0&0&0&0
\end{pmatrix},
\\[1mm]
K_1^5&=
\frac12
\begin{pmatrix}
0&0&0&0\\
0&0&1&1\\
0&0&0&0\\
0&0&1&1
\end{pmatrix},
&
K_1^6&=
\begin{pmatrix}
0&0&0&0\\
\frac{1}{\sqrt2}&0&0&0\\
0&0&0&0\\
\frac{1}{\sqrt2}&0&0&0
\end{pmatrix}.
\end{align}
For $x_3=2$,
\begin{align}
K_2^1&=
\begin{pmatrix}
0&0&0&0\\
0&-\frac12&0&0\\
0&0&0&0\\
0&\frac12&0&0
\end{pmatrix},
&
K_2^2&=
\begin{pmatrix}
\frac12&0&0&0\\
0&0&0&0\\
\frac12&0&0&0\\
0&0&0&0
\end{pmatrix},
\\[1mm]
K_2^3&=
\begin{pmatrix}
0&\frac12&0&0\\
0&0&0&0\\
0&\frac12&0&0\\
0&0&0&0
\end{pmatrix},
&
K_2^4&=
\begin{pmatrix}
0&0&0&0\\
-\frac12&0&0&0\\
0&0&0&0\\
\frac12&0&0&0
\end{pmatrix},
\\[1mm]
K_2^5&=
\begin{pmatrix}
0&0&0&0\\
0&0&0&\frac{1}{\sqrt2}\\
0&0&0&0\\
0&0&0&\frac{1}{\sqrt2}
\end{pmatrix},
&
K_2^6&=
\begin{pmatrix}
0&0&0&0\\
0&0&\frac{1}{\sqrt2}&0\\
0&0&0&0\\
0&0&\frac{1}{\sqrt2}&0
\end{pmatrix}.
\end{align}
Finally, the receiver performs the binary measurement
$\{M_0,\mathbb{I}_4-M_0\}$, with
\begin{equation}
M_0=
\begin{pmatrix}
1&0&0&0\\
0&\frac{1}{2}&0&-\frac{1}{2}\\
0&0&1&0\\
0&-\frac{1}{2}&0&\frac{1}{2}
\end{pmatrix}.
\end{equation}
For this strategy,
\begin{equation}
p_{x_1x_2x_3}
=
\Tr\!\left[
M_0\,
\mathcal{E}_{x_3}
\left(
\rho^{(1)}_{x_1}\otimes\rho^{(2)}_{x_2}
\right)
\right],
\end{equation}
and substitution into $\mathcal{I}_{12}$ gives
$7+\sqrt{2} \simeq 8.4142>8.$

\end{document}